 \def\dref#1{(\ref{#1})}
\newtheorem{theorem}{Theorem}[section]
\newtheorem{corollary}{Corollary}
\newtheorem{lemma}[theorem]{Lemma}
\newtheorem{proposition}{Proposition}
\theoremstyle{definition}
\newtheorem{definition}[theorem]{Definition}
\newtheorem{remark}{Remark}
\title[Consensus of Discrete-Time Linear Multi-Agent
Systems]
      {Consensus of Discrete-Time Linear Multi-Agent
Systems with Observer-Type Protocols}
\author[Zhongkui Li, Zhisheng Duan and Guanrong Chen]{}
\subjclass{Primary: 93A14, 93C55; Secondary: 93C05.}
 \keywords{Consensus, multi-agent system,
discrete-time linear system, observer-type protocol, consensus
region, formation control}
 \email{zhongkli@gmail.com}
 \email{duanzs@pku.edu.cn}
 \email{eegchen@cityu.edu.hk}
\thanks{This work was supported by the National Science Foundation of China
under Grants 60974078 and 10832006.}
\begin{document}
\maketitle


\centerline{\scshape Zhongkui Li}
\medskip
{\footnotesize
 \centerline{School of Automation, Beijing Institute of Technology}
   \centerline{Beijing 100081, P. R. China}
}
\medskip

\centerline{\scshape Zhisheng Duan }
\medskip
{\footnotesize
 \centerline{State Key Lab for Turbulence and Complex Systems}
   \centerline{Department of Mechanics and Aerospace Engineering, College of
Engineering}
   \centerline{ Peking University,\, Beijing  100871, P. R. China}
} 

\medskip

\centerline{\scshape Guanrong Chen}
\medskip
{\footnotesize
 \centerline{Department of Electronic Engineering}
   \centerline{City University
of Hong Kong, Hong Kong, P. R. China} }

\bigskip

 \centerline{(Dedicated to Professor Qishao Lu with respect and admiration)}

\begin{abstract}
This paper concerns the consensus of discrete-time multi-agent
systems with linear or linearized dynamics. An observer-type
protocol based on the relative outputs of neighboring agents is
proposed. The consensus of such a multi-agent system with a directed
communication topology can be cast into the stability of a set of
matrices with the same low dimension as that of a single agent. The
notion of discrete-time consensus region is then introduced and
analyzed. For neurally stable agents, it is shown that there exists
an observer-type protocol having a bounded consensus region in the
form of an open unit disk, provided that each agent is stabilizable
and detectable. An algorithm is further presented to construct a
protocol to achieve consensus with respect to all the communication
topologies containing a spanning tree. Moreover, for unstable agents,
an algorithm is proposed to construct a protocol having an
origin-centered disk of radius $\delta$ ($0<\delta<1$) as its
consensus region,
where $\delta$ has to further satisfy a constraint
related to the unstable eigenvalues of a single agent for the case
where each agent has
a least one eigenvalue outside the unit circle.
Finally, the consensus algorithms are applied to
solve formation control problems of multi-agent systems.
\end{abstract}

\section{Introduction}

In recent years, the consensus issue of multi-agent systems has
received compelling attention from various scientific communities,
for its broad applications in such broad areas as satellite
formation flying, cooperative unmanned air vehicles, and air traffic
control, to name just a few. In \cite{vic95}, a simple model is
proposed for phase transition of a group of self-driven particles
with numerical demonstration of the complexity of the model. In
\cite{jad03}, it provides a theoretical explanation for the behavior
observed in \cite{vic95} by using graph theory. In \cite{sab04}, a
general framework of the consensus problem for networks of dynamic
agents with fixed or switching topologies is addressed. The
conditions given by \cite{sab04} are further relaxed in
\cite{ren052}. In \cite{hong} and \cite{hong2}, tracking control for
multi-agent consensus with an active leader is considered, where a
local controller is designed together with a neighbor-based
state-estimation rule. Some predictive mechanisms are introduced in
\cite{zhang} to achieve ultrafast consensus. In \cite{li2009,
linjia}, the $H_\infty$ consensus and control problems for networks
of agents with external disturbances and model uncertainties are
investigated. The consensus problems of networks of
double-integrator or high-order integrator agents are studied in
\cite{lin2, ren072, ren08, sun, tian, xie}. A distributed algorithm
is proposed in \cite{cortes} to asymptotically achieve consensus in
finite time. The so-called $\epsilon$-consensus problem is
considered in \cite{bau} for networks of dynamic agents with unknown
but bounded disturbances. The average agreement problem is examined
in \cite{fra} for a network of integrators with quantized links. The
controlled agreement problem of multi-agent networks is investigated
from a graph-theoretic perspective in \cite{erg}. Flocking
algorithms are investigated in \cite{sab06, su, tan} for a group of
autonomous agents. Another topic that is closely related to the
consensus of multi-agent systems is the synchronization of coupled
nonlinear oscillators, which has been extensively studied, e.g., in
\cite{bow, duan08, duan09, liu07, pe98, yam}. For a relatively
complete coverage of the literatures on consensus, readers are
referred to the recent surveys \cite{sab07, ren07}. In most existing
studies on consensus, the agent dynamics are restricted to be
first-, second-, and sometimes high-order integrators, and the
proposed consensus protocols are based on the relative states
between neighboring agents.

This paper considers the consensus of discrete-time linear
multi-agent systems with directed communication topologies. Previous
studies along this line include \cite{li2, li3, ma2, sup, seo, tun, tun2,
cheng}. In \cite{ma2, tun, tun2, cheng}, static consensus protocols
based on relative states of neighboring agents are used. The
discrete-time protocol in \cite{sup} requires the absolute output
measurement of each agent to be available, which is impractical in
many cases, e.g., the deep-space formation flying \cite{smh}.
Contrary to the protocol in \cite{sup}, an observer-type consensus
protocol is proposed here, based only on relative output
measurements of neighboring agents, which contains the static
consensus protocol developed in \cite{tun} as a special case. The
observer-type protocol proposed here can be seen as an extension of
the traditional observer-based controller for a single system to one
for the multi-agent systems. The Separation Principle of the
traditional observer-based controllers still holds in the
multi-agent setting presented in this paper.

More precisely, a decomposition approach is utilized here to convert
the consensus of a multi-agent system, whose communication topology
has a spanning tree, into the stability of a set of matrices with
the same dimension as a single agent. The final consensus value
reached by the agents is derived. Inspired by
the notion of continuous-time consensus region introduced in
\cite{li2} and the synchronized regions of complex networks studied
in \cite{duan08, liu07, pe98}, the notion of discrete-time consensus
region is introduced and analyzed. It is pointed out through
numerical examples that the consensus protocol should have a
reasonably large bounded consensus region so as to be robust to
variations of the communication topology. For the special case where the
state matrix is neutrally stable, it is shown that there exists an
observer-type protocol with a bounded consensus region in the form
of an open unit disk, if each agent is stabilizable and detectable.
An algorithm is further presented to construct a protocol to achieve
consensus with respect to all the communication topologies
containing a spanning tree. The main result in \cite{tun} can be
thereby easily obtained as a corollary. On the contrary, for the
general case where the state matrix is unstable, 
an algorithm is proposed to construct a protocol with the
origin-centered disk of radius $\delta$ ($0<\delta<1$) as its
consensus region. It is pointed out that $\delta$ has to
further satisfy a constraint
relying on the unstable eigenvalues of the state matrix
for the case where each agent has
a least one eigenvalue outside the unit circle,
which shows that the consensus problem of the discrete-time multi-agent systems
is generally more difficult to solve, compared to the continuous-time case in \cite{li2,li3}.

In the final, the consensus algorithms are modified to solve
formation control problems of multi-agent systems. Previous related
works include \cite{fax, laf, ren082}. In \cite{fax}, a Nyquist-type
criterion is presented to analyze the formation stability. The agent
dynamics in \cite{laf, ren082} are second-order integrators. In this
paper, a sufficient condition is given for the existence of a
distributed protocol to achieve a specified formation structure for
the multi-agent network, which generalizes the results in \cite{laf,
ren082}. Such a protocol can be constructed via the algorithms
proposed
as above. 

The rest of this paper is organized as follows. Notations and some
useful results of the graph theory is reviewed in Section 2. The
notion of discrete-time consensus region is introduced and analyzed
in Section 3. The special case where the state matrix is neutrally
stable is considered in Section 4. The case where the state
matrix is unstable is investigated in
Section 5. The consensus algorithms are applied to formation control
of multi-agent systems in Section 6. Section 7 concludes the paper.

\section{Notations and Preliminaries}

Let $\mathbf{R}^{n\times n}$ and $\mathbf{C}^{n\times n}$ be the
sets of $n\times n$ real matrices and complex matrices,
respectively. Matrices, if not explicitly stated, have compatible
dimensions in all settings. The superscript $T$ means transpose for
real matrices and $H$ means conjugate transpose for complex
matrices. $\|\cdot\|$ denotes the induced 2-norm. $I_N$ represents
the identity matrix of dimension $N$, and $I$ the identity matrix of
an appropriate dimension. Let $\mathbf{1}\in \mathbf{R}^p$ denote
the vector with all entries equal to one. For $\zeta\in\mathbf{C}$,
$\mathrm{Re}(\zeta)$ denotes its real part. $A\otimes B$ denotes the
Kronecker product of matrices $A$ and
$B$. 
The matrix inequality $A>B$ means that $A$ and $B$ are square
Hermitian matrices and $A-B$
is positive definite. 
A matrix $A\in\mathbf{C}^{n\times n}$ is neutrally stable in the
discrete-time sense if it has no eigenvalue with magnitude larger
than 1 and the Jordan block corresponding to any eigenvalue with
unit magnitude is of size one, while is Schur stable if all of its
eigenvalues have magnitude less than 1. A matrix
$Q\in\mathbf{R}^{n\times n}$ is orthogonal if $QQ^T=Q^TQ=I$. Matrix
$P\in\mathbf{R}^{n\times n}$ is an orthogonal projection onto the
subspace ${\rm{range}}(P)$ if $P^2 = P$ and $P^T = P$. Moreover,
${\rm{range}}(A)$ denotes the column space of matrix $A$, i.e, the
span of its column vectors.

A directed graph $\mathcal {G}$ is a pair $(\mathcal {V}, \mathcal
{E})$, where $\mathcal {V}$ is a nonempty finite set of nodes and
$\mathcal {E}\subset\mathcal {V}\times\mathcal {V}$ is a set of
edges, in which an edge is represented by an ordered pair of
distinct nodes. For an edge $(i,j)$, node $i$ is called the parent
node, $j$ the child node, and $j$ is neighboring to $i$. A graph
with the property that $(i,j)\in\mathcal {E}$ implies $(j,
i)\in\mathcal {E}$ is said to be undirected; otherwise, directed. A
path on $\mathcal {G}$ from node $i_1$ to node $i_l$ is a sequence
of ordered edges of the form $(i_k, i_{k+1})$, $k=1,\cdots,l-1$. A
directed graph has or contains a directed spanning tree if there
exists a node called root such that there exists a directed path
from this node to every other node in the graph.

For a graph $\mathcal {G}$ with $m$ nodes, the row-stochastic matrix
$\mathcal {D}\in\mathbf{R}^{m\times m}$ is defined with $d_{ii}>0$,
$d_{ij}>0$ if $(j,i)\in\mathcal {E}$ but $0$ otherwise, and
$\sum_{j=1}^md_{ij}=1$. According to \cite{ren052}, all of the
eigenvalues of $\mathcal {D}$ are either in the open unit disk or
equal to $1$, and furthermore, $1$ is a simple eigenvalue of
$\mathcal {D}$ if and only if graph $\mathcal {G}$ contains a
directed spanning tree. For an undirected graph, $\mathcal {D}$ is
symmetric.

Let $\Gamma_m$ denote the set of all directed graphs with $m$ nodes
such that each graph contains a directed spanning tree, and let
$\Gamma_{\leq \delta}$ ($0<\delta<1$) denote the set of all directed
graphs containing a directed spanning tree, whose non-one
eigenvalues lie in the disk of radius $\delta$ centered at the
origin.

\subsection{Problem Formulation}

Consider a network of $N$ identical agent with linear or linearized
dynamics in the discrete-time setting, where the dynamics of the
$i$-th agent are described by
\begin{equation}\label{1}
\begin{aligned}
    x_i^+ &=Ax_i+Bu_i,\\
    y_i & = Cx_i, \quad i=1,2,\cdots,N,
\end{aligned}
\end{equation}
where $x_i=x_i(k)=[x_{i,1},\cdots,x_{i,n}]\in\mathbf{R}^{n\times n}$
is the state, $x_i^+=x_i(k+1)$ is the state at the next time
instant, $u_i\in\mathbf{R}^{p}$ is the control input,
$y_i\in\mathbf{R}^{q}$ is the measured output, and $A$, $B$, $C$ are
constant matrices with compatible dimensions.

The communication topology among agents is represented by a directed
graph $\mathcal {G}=(\mathcal {V},\mathcal {E})$, where $\mathcal
{V}=\{1,\cdots,N\}$ is the set of nodes (i.e., agents) and $\mathcal
{E}\subset\mathcal {V}\times\mathcal {V}$ is the set of edges. An
edge $(i,j)$ in graph $\mathcal {G}$ means that agent $j$ can obtain
information from agent $i$, but not conversely.

At each time instant, the information available to agent $i$ is the
relative measurements of other agents with respect to itself, given
by
\begin{equation}\label{er} \zeta_i=\sum_{j=1}^Nd_{ij}(y_i-y_j),
\end{equation}
where $\mathcal {D}=(d_{ii})_{N\times N}$ is the row-stochastic
matrix associated with graph $\mathcal {G}$. A distributed
observer-type consensus protocol is proposed as
\begin{equation}\label{cl}
\begin{aligned}
v_i^+ &=(A+BK)v_i+L\left(\sum_{j=1}^Nd_{ij}C(v_i-v_j)-\zeta_i\right),\\
u_i &=Kv_i,
\end{aligned}
\end{equation}
where $v_i\in\mathbf{R}^{n}$ is the protocol state, $i=1,\cdots,N$,
$L\in\mathbf{R}^{q\times n}$ and $K\in\mathbf{R}^{p\times n}$ are
feedback gain matrices to be determined. In \dref{cl}, the term
$\sum_{j=1}^Nd_{ij}C(v_i-v_j)$ denotes the information exchanges
between the protocol of agent $i$ and those of its neighboring
agents. It is observed that the protocol \dref{cl} maintains the
same communication topology as the agents in \dref{1}.

Let $z_i=[x_i^T,v_i^T]^T$ and $z=[z_1^T,\cdots,z_N^T]^T$. Then, the
closed-loop system resulting from \dref{1} and \dref{cl} can be
written as
\begin{equation}\label{netg}
\begin{aligned}
z^+=(I_N\otimes\mathcal {A}+(I_N-\mathcal {D})\otimes\mathcal {H})z,
\end{aligned}
\end{equation}
where
$$\mathcal {A}=\begin{bmatrix}A & BK\\0 &
A+BK\end{bmatrix}, \quad \mathcal {H}=\begin{bmatrix}0 & 0\\
-LC & LC\end{bmatrix}.$$

\begin{definition} Given agents \dref{1}, the protocol
\dref{cl} is said to solve the consensus problem if
\begin{equation}\label{con}
\|x_i(k)- x_j(k)\|\rightarrow 0,~\text{as} ~k\rightarrow \infty,
~\forall~i,j=1,2,\cdots,N.
\end{equation}
\end{definition}

The following presents a decomposition approach to the consensus
problem of network \dref{netg}.

\begin{theorem} \label{th1}
        For any $\mathcal {G}\in\Gamma_N$, the
agents in \dref{1} reach consensus under protocol \dref{cl} if all
the matrices $A+BK$, $A+(1-\lambda_i)LC$, $i=2,\cdots,N$, are Schur
stable, where $\lambda_i$, $i=2,\cdots,N$, denote the eigenvalues of
$\mathcal {D}$ located in the open unit disk.
\end{theorem}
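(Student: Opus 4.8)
The plan is to decouple the closed-loop network matrix $M=I_N\otimes\mathcal{A}+(I_N-\mathcal{D})\otimes\mathcal{H}$ by block-diagonalizing $\mathcal{D}$, and then, within each resulting block, to separate the state-feedback dynamics from the observer-error dynamics. First I would use that $\mathcal{G}\in\Gamma_N$ contains a spanning tree, so by the cited consequence of \cite{ren052} the value $1$ is a simple eigenvalue of the row-stochastic matrix $\mathcal{D}$, with right eigenvector $\mathbf{1}$ since $\mathcal{D}\mathbf{1}=\mathbf{1}$. Hence there is a nonsingular $T\in\mathbf{C}^{N\times N}$ whose first column is $\mathbf{1}$ and for which $T^{-1}\mathcal{D}T=\mathrm{diag}(1,\Delta)$, where $\Delta\in\mathbf{C}^{(N-1)\times(N-1)}$ is upper triangular and carries the eigenvalues $\lambda_2,\dots,\lambda_N$ on its diagonal, all of magnitude less than $1$.

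Applying the coordinate change $\tilde z=(T^{-1}\otimes I_{2n})z$ converts $M$ into $I_N\otimes\mathcal{A}+(I_N-T^{-1}\mathcal{D}T)\otimes\mathcal{H}$. Since $I_N-T^{-1}\mathcal{D}T$ has a zero first row and first column and an upper-triangular trailing block $I_{N-1}-\Delta$, the transformed matrix is block upper triangular: its first diagonal block is $\mathcal{A}$ and is completely decoupled from the remaining blocks, which form an upper-triangular subsystem with diagonal blocks $\mathcal{A}+(1-\lambda_i)\mathcal{H}$, $i=2,\dots,N$.

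Next I would apply the state/estimation-error similarity $S=\left[\begin{smallmatrix}I&0\\I&-I\end{smallmatrix}\right]$ (with $S^{-1}=S$) to each diagonal block, computing
$$S\bigl(\mathcal{A}+(1-\lambda_i)\mathcal{H}\bigr)S^{-1}=\begin{bmatrix}A+BK & -BK\\ 0 & A+(1-\lambda_i)LC\end{bmatrix},$$
which is again block upper triangular. Reading off its spectrum, $\mathcal{A}+(1-\lambda_i)\mathcal{H}$ is Schur stable exactly when both $A+BK$ and $A+(1-\lambda_i)LC$ are Schur stable; this is the Separation Principle in the present setting. Under the theorem's hypotheses every diagonal block of the transverse subsystem is therefore Schur stable, and since that subsystem is block upper triangular its spectrum is the union of these diagonal spectra, so it is Schur stable and $\tilde z_i(k)\to 0$ for all $i=2,\dots,N$.

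Finally I would translate back to the original coordinates. Writing $z=(T\otimes I_{2n})\tilde z$ and using that the first column of $T$ is $\mathbf{1}$, each block satisfies $z_i=\tilde z_1+\sum_{j=2}^N T_{ij}\tilde z_j$, whence $z_i-z_{i'}=\sum_{j=2}^N (T_{ij}-T_{i'j})\tilde z_j\to 0$ for all $i,i'$; taking the top $n$ components gives $x_i-x_{i'}\to 0$, i.e. consensus in the sense of \dref{con}. I expect the main difficulty to be conceptual rather than computational: the consensus mode $\tilde z_1$, governed by $\mathcal{A}$ and hence by the eigenvalues of $A+BK$ together with those of the possibly unstable $A$, need not decay, so one must correctly isolate this common, non-decaying mode and recognize that consensus is equivalent to Schur stability of only the transverse blocks rather than of $M$ as a whole. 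The separation-principle triangularization then reduces those blocks to the two stated low-dimensional conditions.
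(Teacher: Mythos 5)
Your proposal is correct and follows essentially the same route as the paper: block-diagonalize $I_N-\mathcal{D}$ via a similarity whose first column is $\mathbf{1}$, reduce to the transverse diagonal blocks $\mathcal{A}+(1-\lambda_i)\mathcal{H}$, and split each of these by a separation-principle similarity into $A+BK$ and $A+(1-\lambda_i)LC$. The only cosmetic difference is that the paper first forms the disagreement vector $\xi=\left((I_N-\mathbf{1}r^T)\otimes I_{2n}\right)z$ so that the non-decaying consensus mode is identically zero, whereas you carry that mode along as $\tilde z_1$ and cancel it when forming the differences $z_i-z_{i'}$.
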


\begin{proof}
For any $\mathcal {G}\in\Gamma_N$, it is known that $0$ is a simple
eigenvalue of $I_N-\mathcal {D}$ and the other eigenvalues lie in
the open unit disk centered at $1+{\rm i}0$ in the complex plane,
where ${\rm i}=\sqrt{-1}$. Let $r^T\in\mathbf{R}^{1\times N}$ be the
left eigenvector of $I_N-\mathcal {D}$ associated with the
eigenvalue $0$, satisfying $r^T{\bf 1}=1$. Introduce
$\xi\in\mathbf{R}^{2Nn\times 2Nn}$ by
\begin{equation}\label{au}
\begin{aligned}
\xi(t) &=z(t)-\left(({\bf 1}r^T)\otimes I_{2n}\right)z(t)\\
 &= \left((I_N-{\bf 1}r^T)\otimes I_{2n}\right)z(t),
\end{aligned}
\end{equation}
which satisfies $(r^T\otimes I_{2n})\xi=0$. It is easy to see that
$0$ is a simple eigenvalue of $I_N-{\bf 1}r^T$ with $\mathbf{1}$ as
its right eigenvector, and 1 is another eigenvalue with multiplicity
$N-1$. Thus, it follows from \dref{au} that $\xi=0$ if and only if
$z_1=z_2=\cdots=z_N$, i.e., the consensus problem can be cast into
the Schur stability of vector $\xi$, which evolves according to the
following dynamics:
\begin{equation}\label{au1}
\begin{aligned}
\xi^+=(I_N\otimes\mathcal {A}+(I-\mathcal {D})\otimes\mathcal
{H})\xi.
\end{aligned}
\end{equation}

Next, let $Y\in\mathbf{R}^{N\times(N-1)}$,
$W\in\mathbf{R}^{(N-1)\times N}$, $T\in\mathbf{R}^{N\times N}$, and
upper-triangular $\Delta\in\mathbf{R}^{(N-1)\times(N-1)}$ be such
that
\begin{equation}\label{djo}
T=\begin{bmatrix} \mathbf{1} & Y
\end{bmatrix},  \quad T^{-1}=\begin{bmatrix} r^T \\ W
\end{bmatrix}, \quad T^{-1}(I_N-\mathcal
{D})T=J=\begin{bmatrix} 0 & 0 \\ 0 & \Delta\end{bmatrix},
\end{equation}
where the diagonal entries of $\Delta$ are the nonzero eigenvalues
of $I_N-\mathcal {D}$. Introduce the state transformation
$\zeta=(T^{-1}\otimes I_{2n})\xi$ with
$\zeta=[\zeta_1^T,\cdots,\zeta_N^T]^T$. Then, \dref{au1} can be
represented in terms of $\zeta$ as follows:
\begin{equation}\label{dnetet}
\begin{aligned}
\zeta^+=(I_N\otimes\mathcal {A}+J\otimes\mathcal {H})\zeta.
\end{aligned}
\end{equation}
As to $\zeta_1$, it can be seen from \dref{au} that
\begin{equation}\label{dt10}
\zeta_1=(r^T\otimes I_{2n})\xi\equiv0.
\end{equation}
Note that the elements of the state matrix of \dref{dnetet} are
either block diagonal or block upper-triangular. Hence, $\zeta_i,$
$i=2,\cdots,N$, converge asymptotically to zero if and only if the
$N-1$ subsystems along the diagonal, i.e.,
\begin{equation}\label{dt11}
\zeta^+_i =(\mathcal {A}+(1-\lambda_i)\mathcal {H})\zeta_i,\quad
i=2,\cdots,N,
\end{equation}
are Schur stable. It is easy to verify that matrices $\mathcal
{A}+\lambda_i\mathcal {H}$ are similar to
$$\begin{bmatrix} A+(1-\lambda_i) LC &0 \\
-(1-\lambda_i) LC & A + BK \end{bmatrix},\quad i=2,\cdots,N. $$
Therefore, the Schur stability of the matrices $A+BK$,
$A+(1-\lambda_i)LC$, $i=2,\cdots,N$, is equivalent to that the state
$\zeta$ of \dref{au1} converges asymptotically to zero, implying
that consensus is achieved.
\end{proof}

\begin{remark}
The importance of this theorem lies in that it converts the
consensus problem of a large-scale therefore very high-dimensional
multi-agent network under the observer-type protocol \dref{cl} to
the stability of a set of matrices with the same dimension as a
single agent, thereby significantly reducing the computational
complexity. The directed communication topology $\mathcal {G}$ is
only assumed to have a directed
spanning tree.
The effects of the communication topology on the consensus problem
are characterized by the eigenvalues of the corresponding
row-stochastic matrix $\mathcal {D}$, which may be complex,
rendering the matrices be complex-valued in Theorem \ref{th1}.
\end{remark}

\begin{remark}
The observer-type consensus protocol \dref{cl} can be seen as an
extension of the traditional observer-based controller for a single
system to one for multi-agent systems. The Separation Principle of
the traditional observer-based controllers still holds in this
multi-agent setting. Moreover, the protocol \dref{cl} is based only
on relative output measurements between neighboring agents, which
can be regarded as the discrete-time counterpart of the protocol
proposed in \cite{li2,li3}, including the static protocol used in
\cite{tun} as a special case.
\end{remark}

\begin{theorem} \label{th2}
Consider the multi-agent network
\dref{netg} with a communication topology $\mathcal {G}\in\Gamma_N$.
If protocol \dref{cl} satisfies Theorem \ref{th1}, then
\begin{equation}\label{cv}
\begin{aligned}
x_i(k)&\rightarrow \varpi(k)\triangleq(r^T\otimes
A^k)\begin{bmatrix} x_1(0)\\\vdots
\\ x_N(0)\end{bmatrix}, \\
v_i(k)&\rightarrow 0,~i=1,2,\cdots,N,~ \text{ as} ~k\rightarrow
\infty,
\end{aligned}
\end{equation}
where $r\in \mathbf{R}^{N}$ satisfies $r^T(I_N-\mathcal {D})=0$ and
$r^T{\bf 1}=1$.
\end{theorem}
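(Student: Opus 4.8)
The plan is to build directly on the reduction already established in the proof of Theorem~\ref{th1}. There it is shown that under the stated stability hypotheses the variable $\xi(k)$ of \dref{au} tends to zero, and since the $i$-th block of $\xi$ equals $z_i-s$ with $s\triangleq(r^T\otimes I_{2n})z$, this is exactly the statement that every $z_i(k)$ converges to the common ``consensus variable'' $s(k)=\sum_{j=1}^N r_j z_j(k)$. Writing $s=[s_x^T,s_v^T]^T$ conformally with $z_i=[x_i^T,v_i^T]^T$, so that $s_x=\sum_j r_j x_j$ and $s_v=\sum_j r_j v_j$, the whole problem reduces to identifying the limiting behaviour of $s_x(k)$ and $s_v(k)$; once this is known, $x_i(k)-s_x(k)\to0$ and $v_i(k)-s_v(k)\to0$ transfer the conclusion to each agent.

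First I would derive an autonomous recursion for $s$. Multiplying \dref{netg} on the left by $r^T\otimes I_{2n}$ and using the Kronecker identity $(r^T\otimes I_{2n})(M\otimes \mathcal H)=(r^TM)\otimes\mathcal H$, the coupling term drops out because $r^T(I_N-\mathcal D)=0$, leaving
\begin{equation}\label{splan}
s^+=(r^T\otimes\mathcal A)z=\mathcal A s .
\end{equation}
Partitioning \dref{splan} according to the block structure of $\mathcal A$ gives the triangular pair $s_v^+=(A+BK)s_v$ and $s_x^+=As_x+BK s_v$.

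Next I would read off the two assertions from \dref{splan}. Since Theorem~\ref{th1} guarantees that $A+BK$ is Schur stable, $s_v(k)=(A+BK)^k s_v(0)\to0$, whence $v_i(k)=s_v(k)+(v_i(k)-s_v(k))\to0$, which is the second claim. For the states, solving the triangular system yields $s_x(k)=A^k s_x(0)+\sum_{j=0}^{k-1}A^{k-1-j}BK\,s_v(j)$, and comparison with $\varpi(k)=A^k s_x(0)$ shows that $x_i(k)\to\varpi(k)$ amounts to the vanishing of the convolution term $\sum_{j=0}^{k-1}A^{k-1-j}BK(A+BK)^j s_v(0)$.

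I expect this last point to be the main obstacle, and it is precisely here that the structure of the problem matters: because $A$ is only neutrally stable (Section~4) or outright unstable (Section~5), the factor $A^{k-1-j}$ does not decay, so the convolution need not vanish for arbitrary protocol initial data. The clean conclusion $x_i(k)\to A^k s_x(0)$ therefore rests on $s_v(0)=\sum_j r_j v_j(0)=0$, i.e. on initialising the protocol states at $v_i(0)=0$, the standard observer initialisation; then $s_v\equiv0$, the cross term disappears identically, and $s_x(k)=A^k s_x(0)=\varpi(k)$ exactly, so that $x_i(k)-\varpi(k)=(x_i(k)-s_x(k))\to0$. I would state this initialisation explicitly (or, equivalently, absorb the bounded observer transient into the definition of the consensus value), since it is what makes the final consensus value depend only on the agents' initial states $x_j(0)$ through the left eigenvector $r$.
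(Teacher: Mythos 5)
Your decomposition is, at bottom, the same one the paper uses: the paper diagonalizes $I_N-\mathcal{D}$ via the transformation $T$ of \dref{djo} and arrives at $z_i(k)\rightarrow(r^T\otimes\mathcal{A}^k)z(0)$ (equation \dref{scne2}), which is precisely your splitting $z_i=s+\xi_i$ with $s^+=\mathcal{A}s$, $s(0)=(r^T\otimes I_{2n})z(0)$, and $\xi\rightarrow0$. So methodologically you coincide with the paper. What matters is your caveat about the convolution term, and you are right to insist on it: it exposes a genuine gap in the paper's own proof. The paper's final step (``Since $A+BK$ is Schur stable, \dref{scne2} directly leads to the assertion'') ignores the off-diagonal block of
\[
\mathcal{A}^k=\begin{bmatrix}A^k & \Sigma_k\\ 0 & (A+BK)^k\end{bmatrix},
\qquad
\Sigma_k=\sum_{j=0}^{k-1}A^{k-1-j}BK(A+BK)^j,
\]
so that the first block of $(r^T\otimes\mathcal{A}^k)z(0)$ is $\varpi(k)+\Sigma_k\sum_{j=1}^N r_jv_j(0)$, not $\varpi(k)$. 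Schur stability of $A+BK$ forces $\Sigma_k\rightarrow0$ only when $A$ is itself Schur stable; in the regimes the paper actually cares about (neutrally stable or unstable $A$) the term persists, exactly as you argue. A concrete counterexample to Theorem \ref{th2} as stated: take $n=1$, $A=B=C=1$, $K=-1/2$, $L=-1$, so that $A+BK=1/2$ and $A+(1-\lambda_i)LC=\lambda_i$ are Schur stable and the hypotheses of Theorem \ref{th1} hold for every $\mathcal{G}\in\Gamma_N$; initialize all agents identically with $x_i(0)=0$, $v_i(0)=1$. The network is then in consensus for all $k$, with $v_i(k)=2^{-k}\rightarrow0$, but $x_i(k)=-(1-2^{-k})\rightarrow-1$ while $\varpi(k)\equiv0$. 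Hence the theorem is true only under an additional hypothesis of exactly the kind you impose, namely $\sum_{j=1}^N r_jv_j(0)=0$ (for instance the natural observer initialization $v_i(0)=0$), or alternatively when $A$ is Schur stable. With that hypothesis stated explicitly, your proof is complete and correct. One small wording correction: the Schur stability of $A+BK$ is not ``guaranteed by'' Theorem \ref{th1}; it is part of the hypothesis ``protocol \dref{cl} satisfies Theorem \ref{th1}'', since Theorem \ref{th1} is an implication whose premise is precisely that stability.
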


\begin{proof}
The solution of \dref{netg} can be obtained as
$$\begin{aligned} z(k+1)  &=\left(I_N\otimes\mathcal
{A}+(I_N-\mathcal {D})
          \otimes\mathcal{H}\right)^kz(0) \\
        &=(T\otimes I)(I_N\otimes\mathcal
          {A}+J\otimes\mathcal{H})^k(T^{-1}\otimes I)z(0)\\
        &=(T\otimes I)\begin{bmatrix}\mathcal
            {A}^k & 0 \\ 0 & (I_{ N-1}\otimes\mathcal
            {A}+\Delta\otimes\mathcal {H})^k\end{bmatrix}(T^{-1}\otimes I)z(0),
\end{aligned}$$
where matrices $T$, $J$ and $\Delta$ are defined in \dref{djo}. By
Theorem \ref{th1}, $I_{N-1}\otimes\mathcal {A}+\Delta\otimes\mathcal
{H}$ is Schur stable. Thus,
$$\begin{aligned}
z(k+1) &\rightarrow(\mathbf{1}\otimes I)\mathcal
{A}^k(r^T\otimes I)z(0) \\
&=(\mathbf{1}r^T)\otimes \mathcal {A}^kz(0),~\text{ as}
~k\rightarrow \infty,
\end{aligned}$$ implying that
\begin{equation}\label{scne2}
z_i(k)\rightarrow (r^T\otimes \mathcal {A}^k)z(0), ~\text{ as}
~k\rightarrow \infty,~i=1,\cdots,N.
\end{equation} Since $A+BK$ is Schur stable,
\dref{scne2} directly leads to the assertion.
\end{proof}

\begin{remark}
Some observations on the final consensus
value in \dref{cv} can be concluded as follows:
If $A$ is Schur stable, then
$\varpi(k)\rightarrow 0$, as $k\rightarrow \infty$. If $A$ in \dref{1}
has eigenvalues located outside the open unit circle, then the
consensus value $\varpi(k)$ reached by the agents will tend to
infinity exponentially. On the other
hand, if $A$ has eigenvalues in the closed unit circle, then the
agents in \dref{1} may reach consensus nontrivially. That is, some
states of each agent might approach a common nonzero value. Typical
examples belonging to the last case include the commonly-studied
first-, second-, and high-order integrators.
\end{remark}

\section{Discrete-Time Consensus Regions}

From Theorem \dref{th1}, it can be noticed that the consensus of the
given agents \dref{1} under protocol \dref{cl} depends on the
feedback gain matrices $K$, $L$, and the eigenvalues $\lambda_i$ of
matrix $\mathcal {D}$ associated with the communication graph
$\mathcal {G}$, where matrix $L$ is coupled with $\lambda_i$,
$i=2,\cdots,N$. Hence, it is useful to analyze the correlated
effects of matrix $L$ and graph $\mathcal {G}$ on consensus. To this
end, the notion of consensus region is introduced.

\noindent{\small \bf Definition 2}. Assume that matrix $K$ has been designed
such that $A+BK$ is Schur stable. The region $\mathcal {S}$ of the
parameter $\sigma\subset\mathbf{C}$, such that matrix $A+(1-\sigma)
LC$ is Schur stable, is called the (discrete-time) consensus region
of network \dref{netg}.

The notion of discrete-time consensus region is inspired by the
continuous-time consensus region introduced in \cite{li2} and the
synchronized regions of complex networks studied in \cite{duan08,
liu07, pe98}. The following result is a direct consequence of
Theorem \ref{th1}.

\begin{corollary} \label{cor1} The agents in \dref{1} reach consensus
under protocol \dref{cl} if $\lambda_i\in\mathcal {S}$,
$i=2,\cdots,N$, where $\lambda_i$, $i=2,\cdots,N$, are the
eigenvalues of $\mathcal {D}$ located in the open unit disk.
\end{corollary}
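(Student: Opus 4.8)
The plan is to read this off directly from Theorem~\ref{th1} by matching hypotheses, since the consensus region $\mathcal{S}$ is defined precisely as the set of parameters $\sigma$ for which $A+(1-\sigma)LC$ is Schur stable. First I would recall the standing assumption built into Definition~2: the gain $K$ has already been chosen so that $A+BK$ is Schur stable. This disposes of the first of the three families of matrices appearing in the hypothesis of Theorem~\ref{th1}, and it is automatically in force here rather than something to be re-imposed.

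Next I would translate the membership hypothesis. By the definition of $\mathcal{S}$, asserting $\lambda_i\in\mathcal{S}$ is exactly the same as asserting that $A+(1-\lambda_i)LC$ is Schur stable. Hence, if $\lambda_i\in\mathcal{S}$ for every $i=2,\cdots,N$, then all of the matrices $A+(1-\lambda_i)LC$, $i=2,\cdots,N$, are Schur stable. Combined with $A+BK$ being Schur stable, every matrix listed in the hypothesis of Theorem~\ref{th1} is then Schur stable.

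It remains only to confirm that the $\lambda_i$ in the statement of the corollary denote the same objects as in Theorem~\ref{th1}. Both refer to the eigenvalues of $\mathcal{D}$ lying in the open unit disk; for $\mathcal{G}\in\Gamma_N$ these are precisely $\lambda_2,\cdots,\lambda_N$, because $1$ is a simple eigenvalue of $\mathcal{D}$ and all remaining eigenvalues are strictly inside the unit circle. With this identification made, Theorem~\ref{th1} applies verbatim and yields that the agents in \dref{1} reach consensus under protocol \dref{cl}, completing the argument.

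Since the entire proof is an unwinding of Definition~2 into the premises of Theorem~\ref{th1}, I do not anticipate any genuine obstacle. The only point requiring a moment's care is the bookkeeping observation that $\mathcal{S}$ is defined only after $K$ has been fixed so that $A+BK$ is Schur stable, so that hypothesis of Theorem~\ref{th1} is supplied by the framing of the consensus region itself.
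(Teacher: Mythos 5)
Your proof is correct and matches the paper's approach exactly: the paper states Corollary~\ref{cor1} as a direct consequence of Theorem~\ref{th1}, with the membership $\lambda_i\in\mathcal{S}$ unwinding via Definition~2 (which presupposes $A+BK$ Schur stable) into precisely the hypotheses of that theorem. Your additional care in noting that the $\lambda_i$ being the $N-1$ eigenvalues in the open unit disk implicitly places $\mathcal{G}$ in $\Gamma_N$ is a sound bookkeeping point that the paper leaves tacit.
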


For an undirected communication graph, the consensus region of
network \dref{netg} is a bounded interval or a union of several
intervals on the real axis. However, for a directed graph where the
eigenvalues of $\mathcal {D}$ are generally complex numbers, the
consensus region $\mathcal {S}$ is either a bounded region or a set
of several disconnected regions in the complex plane. Due to the
fact that the eigenvalues of the row-stochastic matrix $\mathcal
{D}$ lie in the unit disk, unbounded consensus regions, desirable
for consensus in the continuous-time setting \cite{li2,li3}, generally
do not exist for the discrete-time consensus considered here.

The following example has a disconnected consensus region.

\noindent{\small \bf Example 1}. The agent dynamics and the consensus
protocol are given by \dref{1} and \dref{cl}, respectively, with
$$\begin{aligned}
A&=\begin{bmatrix}0 & 1 \\ -1 & 1.02\end{bmatrix},\quad
B=\begin{bmatrix} 1 \\ 0\end{bmatrix},\quad C=\begin{bmatrix} 1 & 0
\\ 0 & 1\end{bmatrix},\\
 L&=\begin{bmatrix} 0 & -1 \\ 1 &
0\end{bmatrix},  \quad K=\begin{bmatrix} -0.5 & -0.5\end{bmatrix}.
\end{aligned}$$
Clearly, matrix $A+BK$ with $K$ given as above is Schur stable. For
simplicity in illustration, assume that the communication graph
$\mathcal {G}$ is undirected here. Then, the consensus region is a
set of intervals on the real axis. The characteristic equation of
$A+(1-\sigma) LC$ is
\begin{equation}\label{dcp}
{\rm{det}}(zI-A-\sigma LC)=z^2-1.02z+\sigma^2=0.
\end{equation}
Applying bilinear transformation $z = \frac{s+1}{s-1}$ to \dref{dcp}
gives
\begin{equation}\label{dcp2}
(\sigma^2-0.02)s^2+(1-\sigma^2)s+2.02+\sigma^2=0.
\end{equation}
It is well known that, under the bilinear transformation, \dref{dcp}
has all roots within the unit disk if and only if the roots of
\dref{dcp2} lie in the open left-half plane (LHP). According to the
Hurwitz criterion \cite{ogata}, \dref{dcp2} has all roots in the
open LHP if and only if $0.02<\sigma^2<1$. Therefore, the consensus
region in this case is $\mathcal {S}=(-1,-0.1414)\cup (0.1414, 1)$,
a union of two disconnected intervals. For the communication graph
shown in Figure 1, the corresponding row-stochastic matrix is
$$\mathcal {D}=\begin{bmatrix}0.3& 0.2 & 0.2 & 0.2 & 0 & 0.1
\\ 0.2 & 0.6 & 0.2 &0 &0 &0\\
0.2 & 0.2 & 0.6 & 0 & 0& 0\\ 0.2 & 0 & 0 & 0.4 & 0.4 & 0\\ 0 & 0 & 0
& 0.4 &0.2 &0.4\\ 0.1 & 0 & 0 & 0 & 0.4 & 0.5\end{bmatrix},
$$ whose eigenvalues, other than 1, are
$-0.2935,0.164,0.4,0.4624,0.868$, which all belong to $\mathcal
{S}$. Thus, it follows from Corollary \ref{cor1} that network
\dref{netg} with graph given in Figure 1 can achieve consensus.

\begin{figure}[htbp] \centering
\includegraphics[width=2.2in]{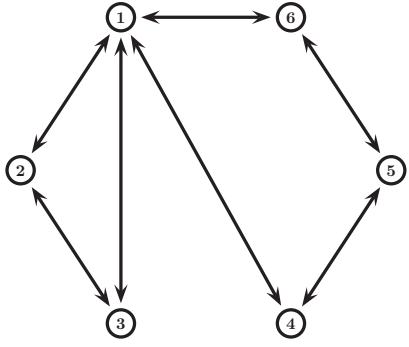}
\caption{The communication topology.}
\end{figure}

Let's see how modifications of the communication topology affect the
consensus. Consider the following two simple cases:

1) An edge is added between nodes 1 and 5, thus
    more information exchange will exist inside the network. Then, the
    row-stochastic matrix $\mathcal {D}$ becomes
    $$\begin{bmatrix}0.2& 0.2 & 0.2 & 0.2 & 0.1 & 0.1
\\ 0.2 & 0.6 & 0.2 &0 &0 &0\\
0.2 & 0.2 & 0.6 & 0 & 0& 0\\ 0.2 & 0 & 0 & 0.4 & 0.4 & 0\\ 0.1 & 0 &
0 & 0.4 &0.2 &0.3\\ 0.1 & 0 & 0 & 0 & 0.4 & 0.5\end{bmatrix},$$
whose eigenvalues, in addition to 1, are
$-0.2346,0.0352,0.4,0.4634,0.836$. Clearly, the eigenvalue $0.0352$
does not belong to $\mathcal {S}$, i.e., consensus can not be
achieved in this case.

 2) The edge between nodes 5 and 6 is removed. The row-stochastic
 matrix $\mathcal {D}$
 becomes
 $$\begin{bmatrix}0.3& 0.2 & 0.2 & 0.2 & 0 & 0.1
\\ 0.2 & 0.6 & 0.2 &0 &0 &0\\
0.2 & 0.2 & 0.6 & 0 & 0& 0\\ 0.2 & 0 & 0 & 0.4 & 0.4 & 0\\ 0.1 & 0 &
0 & 0.4 &0.6 &0\\ 0.1 & 0 & 0 & 0 & 0 & 0.9\end{bmatrix},$$ whose
eigenvalues, other than 1, are $-0.0315, 0.2587,    0.4,    0.8676,
0.9052$. In this case, the eigenvalue $-0.0315$ does not belong to
$\mathcal {S}$, i.e., consensus can not be achieved either.

These sample cases imply that, for disconnected consensus regions,
consensus can be quite fragile to the variations of the network's
communication topology. Hence, the consensus protocol should be
designed to have a sufficiently large bounded consensus region in
order to be robust with respect to the communication topology. This
is the topic of the following sections.

\section{Networks with Neurally Stable Agents}

In this section, a special case where matrix $A$ is neutrally stable
is considered. First, the following lemma is needed.

\begin{lemma} [\cite{zhou}] \label{lem1} For matrix $Q=Q^H\in
\mathbf{C}^{n\times n}$, consider the following Lyapunov equation:
$$A^HXA-X+Q=0.$$
If $X>0$, $Q\geq 0$, and $(Q,V)$ is observable, then matrix $A$ is
Schur stable .
\end{lemma}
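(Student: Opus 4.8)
The plan is to prove a discrete-time Lyapunov-style result: given $A^H X A - X + Q = 0$ with $X > 0$, $Q \geq 0$, and $(Q, A)$ observable (I read the stated hypothesis $(Q,V)$ as a typo for the pair involving $A$), conclude that $A$ is Schur stable. Let me sketch the argument. Let me note the structure here.

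Let me think about what we need. We have the Lyapunov equation $A^H X A - X = -Q$. With $X > 0$ as a candidate Lyapunov matrix and $Q \geq 0$, this is exactly the setup of the discrete-time Lyapunov stability theorem, where observability of $(Q,A)$ upgrades marginal stability to strict (Schur) stability.

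\begin{proof}[Proof sketch]
The plan is to use the matrix $X > 0$ as a Lyapunov function and extract strict decay from the observability hypothesis. First I would let $\lambda$ be any eigenvalue of $A$ with eigenvector $w \neq 0$, so that $Aw = \lambda w$ and $w^H A^H = \bar\lambda w^H$. Multiplying the Lyapunov equation $A^H X A - X + Q = 0$ on the left by $w^H$ and on the right by $w$ gives
\[
(|\lambda|^2 - 1)\, w^H X w = -\, w^H Q w.
\]
Since $X > 0$ we have $w^H X w > 0$, and since $Q \geq 0$ we have $w^H Q w \geq 0$, so $(|\lambda|^2 - 1) \leq 0$, i.e. $|\lambda| \leq 1$. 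This already shows $A$ has no eigenvalue strictly outside the closed unit disk. The remaining and main obstacle is to rule out eigenvalues of modulus exactly $1$.

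To eliminate unit-modulus eigenvalues, suppose toward a contradiction that $|\lambda| = 1$. Then the displayed identity forces $w^H Q w = 0$, and because $Q \geq 0$ this means $Q^{1/2} w = 0$, hence $Q w = 0$. I would then feed this back into the Lyapunov equation evaluated at $w$: from $A^H X A w = X w - Q w = X w$ together with $Aw = \lambda w$ one obtains $\bar\lambda\, A^H X w = X w$, i.e. $A^H (X w) = \lambda\,(X w)$ using $|\lambda|^2 = \lambda\bar\lambda = 1$. The key step is to iterate: the vector $w$ satisfies $Q w = 0$ and is an eigenvector of $A$, so repeatedly applying $A$ shows $Q A^k w = \lambda^k Q w = 0$ for all $k \geq 0$. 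Therefore $w$ lies in the kernel of the observability matrix $[\,Q^{1/2},\ Q^{1/2}A,\ Q^{1/2}A^2,\ \dots\,]^T$ (equivalently, $(Q, A)$ has an unobservable mode $\lambda$ on the unit circle).

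This contradicts observability of $(Q, A)$, which requires that no eigenvector of $A$ lie in $\ker Q$. I expect the main obstacle to be making the observability argument clean: specifically, showing that $Qw = 0$ together with $Aw = \lambda w$ produces an unobservable eigenvector, so I would phrase it via the Popov--Belevitch--Hautus (PBH) test, namely that observability of $(Q,A)$ is equivalent to the nonexistence of a vector $w \neq 0$ with $Aw = \lambda w$ and $Qw = 0$. Since we have produced exactly such a $w$, observability fails, a contradiction. Hence every eigenvalue satisfies $|\lambda| < 1$ and $A$ is Schur stable.
\end{proof}
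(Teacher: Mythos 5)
Your argument is correct in substance, but there is nothing in the paper to compare it against: the paper states this lemma as a known result quoted from \cite{zhou} and gives no proof at all, using it as a black box in the proofs of Proposition \ref{pro1} and Theorem \ref{th3}. Your eigenvector argument is the standard textbook proof of exactly this fact, and it goes through: the quadratic-form identity $(|\lambda|^2-1)\,w^HXw=-\,w^HQw$ with $X>0$, $Q\geq 0$ gives $|\lambda|\leq 1$, and a unit-modulus eigenvalue forces $w^HQw=0$, hence $Q^{1/2}w=0$ and $Qw=0$, which together with $Aw=\lambda w$ is precisely a PBH violation of observability --- a contradiction, so all eigenvalues lie strictly inside the unit disk. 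You were also right to treat the hypothesis ``$(Q,V)$ observable'' as a typo: $V$ is never defined in the lemma, and in the paper's only use of it (inside the proof of Proposition \ref{pro1}) the pair handed to the lemma consists of the state matrix $Q-(1-\sigma)QV^TV$ and the weight $V^TV$ appearing on the right-hand side of the Lyapunov identity \dref{l41}, so the intended hypothesis is observability of the pair formed by $A$ and $Q$ (equivalently $Q^{1/2}$, since they have the same kernel). One small slip, in a step you never use: from $A^HXAw=Xw$, $Aw=\lambda w$, and $|\lambda|=1$ you get $A^H(Xw)=\bar{\lambda}\,(Xw)$, not $\lambda\,(Xw)$; since the PBH contradiction is already complete once you have $Qw=0$ and $Aw=\lambda w$, that sentence can simply be deleted.
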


\begin{proposition} \label{pro1}
For matrices $Q\in \mathbf{R}^{n\times n}$, $V\in
\mathbf{R}^{m\times n}$, $\sigma\in\mathbf{C}$, where $Q$ is
orthogonal, $VV^T=I$, and $(Q,V)$ is observable, if $|\sigma|< 1$,
then the matrix $Q-(1-\sigma)QV^TV$ is Schur stable.
\end{proposition}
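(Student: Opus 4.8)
The plan is to invoke Lemma \ref{lem1} with the trivial Lyapunov solution $X=I$, so that the whole argument reduces to producing the right Stein equation and then checking its observability hypothesis. First I would record that $P:=V^TV$ is an orthogonal projection: $P^T=P$, and $P^2=V^T(VV^T)V=V^TV=P$ by $VV^T=I$. This lets me rewrite the matrix in question as $M:=Q-(1-\sigma)QV^TV=Q\bigl(I-(1-\sigma)P\bigr)$.

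Next I would compute $M^HM$ directly, using $P^H=P$, the orthogonality $Q^HQ=I$, and $P^2=P$:
\[
M^HM=\bigl(I-(1-\bar\sigma)P\bigr)\bigl(I-(1-\sigma)P\bigr)=I+\bigl(|1-\sigma|^2-(1-\sigma)-(1-\bar\sigma)\bigr)P.
\]
Since $|1-\sigma|^2-(1-\sigma)-(1-\bar\sigma)=|\sigma|^2-1$, this collapses to $M^HM=I-(1-|\sigma|^2)P$, equivalently $M^H I M-I+(1-|\sigma|^2)P=0$. This is exactly the Lyapunov equation of Lemma \ref{lem1} with $A=M$, $X=I>0$, and nonnegative term $(1-|\sigma|^2)P\ge 0$; here the strict bound $|\sigma|<1$ is used precisely to make $1-|\sigma|^2>0$, so the term is genuinely positive semidefinite (and nonzero), which is what forces strict Schur stability rather than mere unitarity.

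The step I expect to be the crux is verifying the observability hypothesis of Lemma \ref{lem1} for the new pair, namely that $\bigl(M,(1-|\sigma|^2)V^TV\bigr)$ is observable. Because $x^HV^TVx=\|Vx\|^2$ gives $\ker(V^TV)=\ker V$, and a positive scalar factor is irrelevant to observability, this reduces to showing $(M,V)$ observable given that $(Q,V)$ is. I would argue by the PBH test: suppose $x\ne 0$ satisfies $Mx=\mu x$ and $Vx=0$. Then $Px=V^TVx=0$, so $Mx=Q\bigl(I-(1-\sigma)P\bigr)x=Qx$, whence $Qx=\mu x$ with $Vx=0$ — a right eigenvector of $Q$ annihilated by $V$, contradicting observability of $(Q,V)$. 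Hence no such $x$ exists, $(M,V)$ is observable, and Lemma \ref{lem1} yields that $M=Q-(1-\sigma)QV^TV$ is Schur stable.
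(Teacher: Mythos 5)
Your proposal is correct and takes essentially the same route as the paper: both reduce the claim to the Stein equation $M^H I M - I + (1-|\sigma|^2)V^TV = 0$ (the paper's computation \dref{l41}) and then invoke Lemma \ref{lem1} with $X = I$ and the semidefinite term $(1-|\sigma|^2)V^TV$. The only difference is that you spell out, via the PBH test and the identity $\ker(V^TV)=\ker V$, why observability of $(Q,V)$ transfers to the new pair --- a step the paper dismisses with ``it is easy to verify'' --- so your write-up is a slightly more complete rendering of the same argument.
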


\begin{proof}
Observe that
\begin{equation}\label{l41}
\begin{aligned}
&(Q-(1-\sigma)QV^TV)^{H}(Q-(1-\sigma)QV^TV)-I \\
         &\qquad = Q^TQ-(1-\sigma)Q^TQV^TV-(1-\bar{\sigma})V^TVQ^TQ \\
         &\qquad\quad +|1-\sigma|^2V^TVQ^TQV^TV-I\\
         &\qquad= (-2\mathbf{Re}(1-\sigma)+|1-\sigma|^2)V^TV\\
         &\qquad= (|\sigma|^2-1)V^TV.
\end{aligned}
\end{equation}
Since $(Q,V)$ is observable, it is easy to verify that
$(Q-(1-\sigma)QV^TV,V^TV)$ is also observable. Then, by Lemma
\ref{lem1}, \dref{l41} implies that $Q-(1-\sigma)QV^TV$ is Schur
stable for any $|\sigma|< 1$.
\end{proof}

Next, an algorithm for protocol \dref{cl} is presented, which will
be used later.

\noindent{\bf Algorithm 1}. Given that $A$ is neutrally stable and
that $(A,B,C)$ is stabilizable and detectable, the protocol
\dref{cl} can be constructed as follows:
\begin{itemize}
\item[1)] Select $K$ be such that $A+BK$ is Schur stable.

\item[2)] Choose $U\in\mathbf{R}^{n\times n_1}$ and
$W\in\mathbf{R}^{n\times(n-n_1)}$, satisfying \footnote{Matrices $U$
and $W$ can be derived by transforming matrix $A$ into the real
Jordan canonical form \cite{horn85}.}
\begin{equation}\label{a11}
\begin{bmatrix} U & W \end{bmatrix}^{-1}A
\begin{bmatrix} U & W \end{bmatrix}=
\begin{bmatrix} M & 0 \\ 0 & X \end{bmatrix},
\end{equation}
where $M\in\mathbf{R}^{n_1\times n_1}$ is orthogonal and
$X\in\mathbf{R}^{(n-n_1)\times (n-n_1)}$ is Schur stable.

\item[3)] Choose $V\in\mathbf{R}^{m\times n_1} $ such that $VV^T=I_m$ and
$\mathrm{range}(V^T)=\mathrm{range}(U^TC^T)$. 

\item[4)] Define $L=-UMV^T(CUV^T)^{-1}$.
\end{itemize}

\begin{theorem}\label{th3}
Suppose that matrix $A$ is neutrally stable and that $(A,B,C)$ is
stabilizable and detectable. The protocol \dref{cl} constructed via
Algorithm 1 has the open unit disk as its bounded consensus region.
Thus, such a protocol solves the consensus problem for \dref{1} with
respect to $\Gamma_N$, the set of all the communication topologies
containing a spanning tree.
\end{theorem}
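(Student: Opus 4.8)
The plan is to exploit the block structure built into Algorithm 1: the similarity $P=\begin{bmatrix} U & W\end{bmatrix}$ block-diagonalizes $A$ into $\mathrm{diag}(M,X)$ with $M$ orthogonal and $X$ Schur stable, and I will show that in these coordinates $A+(1-\sigma)LC$ becomes block upper-triangular with diagonal blocks $M-(1-\sigma)MV^TV$ and $X$. Proposition \ref{pro1} then governs the top-left block and pins down the consensus region exactly.

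First I would write $P^{-1}=\begin{bmatrix} \bar U \\ \bar W\end{bmatrix}$, so that $\bar U U=I_{n_1}$, $\bar W U=0$, $\bar U W=0$, $\bar W W=I$. Using $L=-UMV^T(CUV^T)^{-1}$ together with $\bar U U=I$ and $\bar W U=0$ gives $P^{-1}L=\begin{bmatrix} -MV^T(CUV^T)^{-1}\\ 0\end{bmatrix}$, and since $CP=\begin{bmatrix} CU & CW\end{bmatrix}$, a direct multiplication yields that $P^{-1}(A+(1-\sigma)LC)P$ is block upper-triangular with diagonal blocks $M-(1-\sigma)MV^T(CUV^T)^{-1}CU$ and $X$. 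The crucial simplification is supplied by step 3: since $\mathrm{range}(V^T)=\mathrm{range}((CU)^T)$ and both $V^T$ and $(CU)^T$ have full column rank $q$ with equal range, there is an invertible $R$ with $CU=R^TV$, whence $CUV^T=R^T$ (so $L$ is well defined) and $(CUV^T)^{-1}CU=V$. Thus the top-left block is precisely $M-(1-\sigma)MV^TV$, the form treated in Proposition \ref{pro1}.

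Next I would verify $(M,V)$ observability from detectability. By the PBH test, if it failed there would be $w\neq 0$ with $Mw=\mu w$, $|\mu|=1$, and $Vw=0$, hence $CUw=R^TVw=0$; then $\begin{bmatrix} w^T & 0\end{bmatrix}^T$ would be an unobservable mode of $(\mathrm{diag}(M,X),\begin{bmatrix} CU & CW\end{bmatrix})$ with $|\mu|=1$, contradicting detectability of $(C,A)$. With $M$ orthogonal, $VV^T=I$, and $(M,V)$ observable, Proposition \ref{pro1} makes $M-(1-\sigma)MV^TV$ Schur stable for all $|\sigma|<1$; since $X$ is Schur by construction and the matrix is block triangular, $A+(1-\sigma)LC$ is Schur for every $|\sigma|<1$. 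For the converse I would reuse identity \dref{l41} specialized to $Q=M$, namely $(M-(1-\sigma)MV^TV)^H(M-(1-\sigma)MV^TV)-I=(|\sigma|^2-1)V^TV$: on $|\sigma|=1$ the block is unitary, so all its eigenvalues lie on the unit circle; for $|\sigma|>1$ it satisfies $N^HN\ge I$ and is therefore invertible with $\|N^{-1}\|\le 1$, so every eigenvalue of $N$ has modulus $\ge 1$. In either case $A+(1-\sigma)LC$ fails to be Schur, so the consensus region $\mathcal{S}$ is exactly the open unit disk.

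Finally, for any $\mathcal{G}\in\Gamma_N$ the eigenvalues $\lambda_i$ ($i=2,\dots,N$) of $\mathcal{D}$ other than $1$ satisfy $|\lambda_i|<1$, hence $\lambda_i\in\mathcal{S}$; combined with $A+BK$ Schur stable from step 1, Corollary \ref{cor1} delivers consensus with respect to all topologies in $\Gamma_N$. I expect the main obstacle to be the triangularization bookkeeping, especially extracting the identity $(CUV^T)^{-1}CU=V$ from the range condition and reading $(M,V)$ observability out of detectability via PBH; once the diagonal blocks are correctly identified, the Schur versus non-Schur dichotomy follows cleanly from Proposition \ref{pro1} and \dref{l41}.
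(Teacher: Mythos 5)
Your proof is correct and follows essentially the same route as the paper's: the same $\begin{bmatrix} U & W\end{bmatrix}$ block-triangularization reducing $A+(1-\sigma)LC$ to diagonal blocks $M-(1-\sigma)MV^TV$ and $X$, Proposition \ref{pro1} applied to the top block, and Corollary \ref{cor1} to pass from the consensus region to consensus over $\Gamma_N$. Your additions---the factorization $CU=R^TV$ yielding $(CUV^T)^{-1}CU=V$, the explicit PBH argument deducing observability of $(M,V)$ from detectability of $(A,C)$, and the converse via \dref{l41} showing the region is \emph{exactly} the open unit disk rather than merely containing it---are worthwhile details that the paper asserts or omits, but they do not constitute a different approach.
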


\begin{proof}
Let the related variables be defined as in Algorithm 1. Assume
without loss of generality that matrix $CU$ is of full row rank.
Since $V^TV$ is an orthogonal projection onto
$\mathrm{range}(V^T)=\mathrm{range}(U^TC^T)$, matrix $CUV^T$ is
invertible and $V^TVU^TC^T=U^TC^T$, so that $V= (CU V^T)^{-1}CU$,
and hence $LCU =-UMV^TV$. Also, the detectability of $( A,C)$
implies that $(M,V )$ is observable. Let $U^{\dag}\in
\mathbf{R}^{n_1\times n}$ and $W^{\dag}\in \mathbf{R}^{(n-n_1)\times
n}$ be such that $\begin{bmatrix} U^{\dag}
\\ W^{\dag}
\end{bmatrix}=\begin{bmatrix} U & W \end{bmatrix}^{-1},$
where $U^{\dag}U=I$, $W^{\dag}W=I$, $U^{\dag}W=0$, and
$W^{\dag}U=0$. Then,
\begin{equation}\label{dl31}
\begin{aligned}
&\begin{bmatrix} U & U \end{bmatrix}^{-1}(A+(1-\sigma)LC)
\begin{bmatrix} U & W \end{bmatrix} \\
        &= \begin{bmatrix} M+(1-\sigma)U^{\dag}LCU & (1-\sigma)U^{\dag}LCW \\
        (1-\sigma)W^{\dag}LCU&
        X+(1-\sigma)W^{\dag}LCW\end{bmatrix}\\
        &= \begin{bmatrix} M-(1-\sigma)MV^TV & -(1-\sigma)U^{\dag}LCW \\ 0 &
        X\end{bmatrix}.
\end{aligned}
\end{equation}
By Lemma \ref{lem1}, matrix $M-(1-\sigma)MV^TV$ is Schur stable for
any $|\sigma|< 1$. Hence, \dref{dl31} implies that matrix
$A+(1-\sigma)LC$ with $L$ given by Algorithm 1 is Schur stable for
any $|\sigma|< 1$, i.e., the protocol \dref{cl} constructed via
Algorithm 1 has a bounded consensus region in the form of the open
unit disk. Since the  eigenvalues of any communication topology
containing a spanning tree lie in the open unit disk, except
eigenvalue 1, it follows from Corollary \ref{cor1} that this
protocol solves the consensus problem with respect to $\Gamma_N$.
\end{proof}

In \cite{tun}, the consensus of the following coupled network is
considered:
\begin{equation}\label{tu}
x_i^+ =Ax_i+LC\sum_{j=1}^Nd_{ij}(x_i-x_j), \quad i=1,2\cdots,N,
\end{equation}
where $(d_{ij})_{N\times N}$ is defined as in \dref{er} and matrix
$L$ is to be designed.

The main result of \cite{tun} can be easily obtained as a corollary
here.

\begin{corollary} \label{cor2} There exists a matrix $L$ such that
network \dref{tu} has the open unit disk as its consensus region,
i.e., the network can reach consensus with respect to $\Gamma_N$, if
and only if the pair $(A,C)$ is detectable. Such a matrix $L$ can be
constructed via Algorithm 1.
\end{corollary}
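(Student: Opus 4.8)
The plan is to reduce the consensus of the coupled network \dref{tu} to the Schur stability of a single low-dimensional matrix, exactly as was done in Theorem \ref{th1}, and then to treat the two implications separately, the necessity being immediate and the sufficiency being essentially a restatement of Theorem \ref{th3}.

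First I would rewrite \dref{tu} in stacked form. Writing $x=[x_1^T,\cdots,x_N^T]^T$ and using $\sum_{j=1}^N d_{ij}(x_i-x_j)=\big((I_N-\mathcal{D})x\big)_i$ (since $\sum_j d_{ij}=1$), the network becomes $x^+=\big(I_N\otimes A+(I_N-\mathcal{D})\otimes LC\big)x$. Applying the same similarity transformation $T$ of \dref{djo} that block-triangularizes $I_N-\mathcal{D}$, the disagreement dynamics decouple into the $N-1$ subsystems governed by the matrices $A+(1-\lambda_i)LC$, $i=2,\cdots,N$, where $\lambda_i$ are the non-one eigenvalues of $\mathcal{D}$. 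Hence the consensus region of \dref{tu} coincides with the set $\mathcal{S}=\{\sigma\in\mathbf{C}:A+(1-\sigma)LC\text{ is Schur stable}\}$ of Definition 2 (no $K$ is present here), and by Corollary \ref{cor1} the network reaches consensus with respect to $\Gamma_N$ if and only if $\mathcal{S}$ contains the open unit disk.

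For the sufficiency, suppose $(A,C)$ is detectable; since $A$ is neutrally stable (the standing assumption of this section), I would construct $L$ from steps 2)--4) of Algorithm 1. Detectability forces every unit-modulus (hence non-Schur) mode of $A$ to be observable, so the pair $(M,V)$ in \dref{a11} is observable. Repeating the computation \dref{dl31}, $A+(1-\sigma)LC$ is similar to a block upper-triangular matrix with diagonal blocks $M-(1-\sigma)MV^TV$ and $X$; the latter is Schur by construction, and Proposition \ref{pro1} shows the former is Schur for every $|\sigma|<1$. Thus $\mathcal{S}$ contains the open unit disk, which is exactly the conclusion of Theorem \ref{th3} read on the subnetwork \dref{tu} (where the gain $K$ plays no role), and consensus with respect to $\Gamma_N$ follows.

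The necessity is the part one might expect to be delicate but is in fact the cleanest. Assume some $L$ makes $\mathcal{S}$ contain the open unit disk, and suppose toward a contradiction that $(A,C)$ is not detectable. By the PBH test there exist $\mu\in\mathbf{C}$ with $|\mu|\geq1$ and $\eta\neq0$ with $A\eta=\mu\eta$ and $C\eta=0$; then $(A+(1-\sigma)LC)\eta=\mu\eta$ for every $\sigma$, so $A+(1-\sigma)LC$ always retains the eigenvalue $\mu$ and is never Schur stable, contradicting that interior points of the unit disk lie in $\mathcal{S}$. Hence $(A,C)$ must be detectable. The only real work therefore sits in the reduction and in invoking Proposition \ref{pro1}, both already in hand; the main obstacle is simply to state the reduction for \dref{tu} cleanly and to observe that detectability is precisely what guarantees observability of the marginal block $(M,V)$ required by Algorithm 1.
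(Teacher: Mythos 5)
Your proposal is correct and takes essentially the same approach as the paper: the paper's own proof simply cites Theorem \ref{th1} for the reduction of \dref{tu} to the Schur stability of the matrices $A+(1-\lambda_i)LC$, declares the necessity obvious, and states that the sufficiency follows from Theorem \ref{th3}. Your write-up fills in exactly those three steps --- the stacked-form decomposition, the PBH argument for necessity (the paper's ``obvious'' part), and the Algorithm 1 / Proposition \ref{pro1} computation for sufficiency (the content of Theorem \ref{th3}).
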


\begin{proof} For any communication topology $\mathcal {G}\in\Gamma_N$, it follows from Theorem
\ref{th1} that there exists a matrix $L$ such that network \dref{tu}
achieves consensus if and only if matrices $A+(1-\lambda_i)LC$,
$i=2,\cdots,N$, are Schur stable. The necessity is obvious and the
sufficiency follows readily from Theorem \ref{th3}.
\end{proof}

\begin{remark} Compared to Theorem 6 in \cite{tun}, the
above corollary presents a necessary and sufficient condition for
the existence of a matrix $L$ that ensures the network to reach
consensus. Moreover, the method leading to this corollary is quite
different from and comparatively much simpler than that used in
\cite{tun}. Of course, it should be admitted that the proof of
Theorem \ref{th3} above is partly inspired by \cite{tun}.
\end{remark}

\noindent{\bf Example 2}. Consider a network of agents described by
\dref{1}, with
$$A=\begin{bmatrix} 0.2 & 0.6 & 0\\ -1.4& 0.8& 0\\0.7 & 0.2
&-0.5\end{bmatrix},\quad B=\begin{bmatrix} 0 \\ 1 \\ 0\end{bmatrix},
\quad C=\begin{bmatrix} 1 & 0 & 1\end{bmatrix}.$$
The eigenvalues of matrix $A$ are $-0.5$, $0.5\pm {\rm i}0.866$,
thus $A$ is neutrally stable. In protocol \dref{cl}, choose
$K=\left[\begin{matrix}1.2 &-0.9 &-0.2\end{matrix}\right]$ such that
$A+BK$ is Schur stable. The matrices
$$U=\begin{bmatrix} 0.1709 & -0.4935 \\ 0.7977  &  0\\-0.0570 & -0.2961\end{bmatrix},
\quad W=\begin{bmatrix} 0 \\ 0 \\ 1\end{bmatrix}$$ satisfy
\dref{a11} with $M=\left[\begin{smallmatrix} 0.5 & 0.866
\\-0.866 & 0.5\end{smallmatrix}\right]$ and $X=-0.5$. Thus,
$U^TC^T=\left[\begin{smallmatrix}0.1139& -0.7896\end{smallmatrix}\right]^T$.
Take $V=\left[\begin{matrix}0.1428& -0.9898\end{matrix}\right]$ such
that $VV^T=1$ and ${\rm{range}}(V^T)={\rm{range}}(U^TC^T)$. Then, by
Algorithm 1, one obtains $L=\left[\begin{matrix}-0.2143& 0.7857 &
-0.2857\end{matrix}\right]^T$. In light of Theorem \ref{th3}, the
agents considered in this example will reach consensus under the
protocol \dref{cl}, with $K$ and $L$ given as above, with respect to
all the communication topologies containing a spanning tree.

\section{Networks with Unstable Agents}

This section considers the general case where matrix $A$ is not neutrally stable,
i.e., $A$ is allowed to have eigenvalues outside the unit circle
or has at least one eigenvalue with unit magnitude whose corresponding Jordan
block is of size larger than 1.

Before moving forward, one introduces the following modified algebraic
Riccati equation (MARE) \cite{kat, sch, sino}:
\begin{equation}\label{ric}
P=APA^T-(1-\delta^2) APC^T(CPC^T+I)^{-1}CPA^T+Q,
\end{equation}
where $P\geq0$, $Q>0$, and $\delta\in\mathbf{R}$. For $\delta=0$, the
MARE \dref{ric} is reduced to the commonly-used discrete-time
Riccati equation discussed in, e.g., \cite{zhou}.

The following lemma concerns the existence of solutions for the
MARE.

\begin{lemma}[\cite{sch, sino}]\label{lric}
Let $(A,C)$ be detectable.
Then, the following statements hold.
\begin{itemize}
\item[a)]
Suppose that the matrix $A$ has no eigenvalues with magnitude larger than 1,
Then, the MARE \dref{ric} has a unique positive-definite solution $P$ for any $0<\delta<1$.

\item[b)]
For the case where $A$ has a least one
eigenvalue with magnitude larger than 1 and the rank of $B$ is one,
the MARE \dref{ric} has a unique positive-definite solution $P$,
if $0<\delta<\frac{1}{\underset{i}{\Pi}|\lambda_i^u(A)|}$, where
$\lambda_i^u(A)$ denote
the unstable eigenvalues of $A$.

\item[c)]
If the MARE \dref{ric} has a unique positive-definite solution $P$, then
$P=\lim_{k\rightarrow \infty}P_k$ for any initial condition
$P_0\geq 0$, where $P_k$ satisfies
$$P(k+1)=AP(k)A^T-(1-\delta^2) AP(k)C^T(CP(k)C^T+I)^{-1}CP(k)A^T+Q.$$
\end{itemize}
\end{lemma}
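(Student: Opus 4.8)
The plan is to recognize \dref{ric} as the modified (``intermittent-observation'') Riccati recursion and to analyze the operator
$$g(X)=AXA^T-(1-\delta^2)AXC^T(CXC^T+I)^{-1}CXA^T+Q$$
on the cone of positive semidefinite matrices, writing $\lambda=1-\delta^2\in(0,1)$ throughout. The first step is to establish the variational identity $g(X)=\min_{K}F_K(X)$, where
$$F_K(X)=(1-\lambda)AXA^T+\lambda(A+KC)X(A+KC)^T+\lambda KK^T+Q,$$
the minimizer being $K^\ast=-AXC^T(CXC^T+I)^{-1}$ and the completion of squares reproducing exactly the correction term of \dref{ric}. Since each $F_K$ is affine and order-preserving in $X$, this representation shows at once that $g$ is monotone nondecreasing and concave on the semidefinite cone, the two structural properties that drive the whole argument.

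Next I would analyze the monotone iteration $P_0=0$, $P_{k+1}=g(P_k)$. Since $P_1=g(0)=Q\geq P_0$, monotonicity makes $\{P_k\}$ nondecreasing, and the bound $P_k\geq Q>0$ for $k\geq1$ forces any limit to be positive definite; thus the only real question is boundedness, for which detectability of $(A,C)$ is the decisive hypothesis. For part~(a), where $A$ has no eigenvalue of magnitude exceeding $1$, I would exhibit a bounded supersolution $\bar P$ with $g(\bar P)\leq\bar P$: choosing $K$ with $A+KC$ Schur stable (possible by detectability) reduces the task to solving the averaged Lyapunov inequality $(1-\lambda)A\bar PA^T+\lambda(A+KC)\bar P(A+KC)^T\leq\bar P-(\lambda KK^T+Q)$, and the hypothesis $\rho(A)\leq1$ is exactly the condition that keeps the positive operator $X\mapsto(1-\lambda)AXA^T+\lambda(A+KC)X(A+KC)^T$ of spectral radius below $1$ for every $\lambda\in(0,1)$. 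Monotonicity then gives $P_k\leq\bar P$ for all $k$, hence convergence.

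For part~(b) the instability of $A$ makes small $\lambda$ fatal, and the exact threshold must be pinned down. I would first put $A$ in block-triangular form separating the modes inside and outside the unit disk; the stable block contributes a bounded, harmless term (by the mechanism of part~(a) applied to that block), so the obstruction lives entirely in the unstable block. The rank-one hypothesis renders the innovation scalar (the correction term in \dref{ric} has rank one), and for a scalar innovation the boundedness of the iteration can be decided by a single scalar product-of-gains computation along the unstable modes, which yields the critical value $\lambda_c=1-1/\prod_i|\lambda_i^u(A)|^2$. A bounded solution then exists precisely when $\lambda=1-\delta^2>\lambda_c$, i.e. $\delta<1/\prod_i|\lambda_i^u(A)|$. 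This reduction-and-scalar-computation is the step I expect to be the main obstacle, since it is exactly here that the rank-one structure is indispensable and where the sharp constant is produced.

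Finally, uniqueness and part~(c) follow from the same monotonicity and concavity. For uniqueness, any positive-definite solution is a supersolution and therefore dominates the minimal solution built from $P_0=0$; a standard Riccati comparison argument exploiting concavity then forces equality. For global convergence from an arbitrary $P_0\geq0$, I would sandwich the iterate between the iteration started at $0$ and the one started at a supersolution $\bar P\geq P_0$, obtaining $\underline P_k\leq P_k\leq\overline P_k$ with both bracketing sequences converging to the unique fixed point $P$, whence $P_k\to P$.
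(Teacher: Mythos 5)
First, a point of reference: the paper does not prove this lemma at all --- it is quoted from the cited sources \cite{sch, sino}, so the only meaningful benchmark is the argument in those references. Your outline correctly reconstructs its architecture: the variational identity $g(X)=\min_K F_K(X)$, the resulting monotonicity and concavity of $g$, the monotone iteration from $P_0=0$ bounded by a supersolution, and the bracketing/comparison argument for uniqueness and part (c). Those pieces of your proposal are sound and are essentially what the literature does.

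There are, however, two genuine gaps. In part (a), the decisive step is producing a gain $K$ for which the averaged operator $\mathcal{L}_K(X)=(1-\lambda)AXA^T+\lambda(A+KC)X(A+KC)^T$ has spectral radius below $1$, and your claim that $\rho(A)\le 1$ ``is exactly the condition'' for this once $A+KC$ is Schur stable is false: spectral radius is not subadditive over sums of positive operators, and an arbitrary stabilizing $K$ can fail. Concretely, take $A=\begin{pmatrix}0&2\\0&0\end{pmatrix}$ (so $\rho(A)=0\le 1$), $C=I_2$, $K=\begin{pmatrix}0&-2\\2&0\end{pmatrix}$; then $A+KC=\begin{pmatrix}0&0\\2&0\end{pmatrix}$ is Schur stable, yet for $\lambda=\tfrac12$ one computes $\mathcal{L}_K(I)=\tfrac12\bigl(AA^T+(A+KC)(A+KC)^T\bigr)=2I$, whence $\mathcal{L}_K^k(I)\ge 2^kI$ and $\rho(\mathcal{L}_K)\ge 2$. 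So the supersolution must be built with a carefully constructed $K$ --- this is where detectability and the structure of $A$ genuinely enter --- and the hardest case the lemma must cover, defective unit-circle eigenvalues (e.g.\ the double integrator of the paper's Example 3, which relies on part (a)), is exactly where no norm-contraction shortcut exists. In part (b), you explicitly defer the sharp-threshold computation yielding $\delta<1/\prod_i|\lambda_i^u(A)|$ as ``the main obstacle''; since that threshold \emph{is} the content of part (b), a plan that postpones it is not a proof. Finally, you silently replace the lemma's hypothesis $\mathrm{rank}(B)=1$ by a rank-one innovation, i.e.\ $\mathrm{rank}(C)=1$. Since $B$ never appears in the MARE \dref{ric}, the hypothesis as printed attaches to the wrong matrix (an inconsistency the paper inherits by transcribing the dual, control-side result of \cite{sch}); a careful treatment should flag this rather than paper over it.
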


\begin{proposition} \label{pro3}
Suppose that $(A,C)$ be detectable. Then, for the case where $A$ has no eigenvalues with
magnitude larger than 1, the matrix $A+(1-\sigma) LC$
with $L=-APC^T(CPC^T+I)^{-1}$ is Schur stable for any
$|\sigma|\leq\delta$, $0<\delta<1$, where $P>0$ is the unique
solution to the MARE \dref{ric}. Moreover,
for the case where $A$ has at least eigenvalue with
magnitude larger than 1 and $B$ is of rank one,
$A+(1-\sigma) LC$
with $L=-APC^T(CPC^T+I)^{-1}$ is Schur stable for any
$|\sigma|\leq\delta$, $0<\delta<\frac{1}{\underset{i}{\Pi}|\lambda_i^u(A)|}$.
\end{proposition}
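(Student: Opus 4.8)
The plan is to mimic the structure of Proposition \ref{pro1} and Theorem \ref{th3}: exhibit a Lyapunov/Stein equation certifying Schur stability of $A+(1-\sigma)LC$, with the MARE solution $P$ serving as the Lyapunov matrix. First I would set $L=-APC^T(CPC^T+I)^{-1}$ and try to show that $P^{-1}$ (or $P$ itself, whichever is cleaner) works as a Lyapunov matrix. The natural quantity to compute is $(A+(1-\sigma)LC)\,P\,(A+(1-\sigma)LC)^H - P$, expand it, and substitute the MARE \dref{ric} to cancel the bulk of the terms. Writing $F=APC^T(CPC^T+I)^{-1}$ so that $L=-F$ and $A+(1-\sigma)LC = A-(1-\sigma)FC$, the expansion produces cross terms $-(1-\sigma)FCPA^H-(1-\bar\sigma)APC^TF^H$ and the quadratic term $|1-\sigma|^2 FCPC^TF^H$. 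The key algebraic identity to exploit is that $FCPA^H = APC^T(CPC^T+I)^{-1}CPA^H$, which is exactly the Riccati correction term (up to the factor $1-\delta^2$) appearing in \dref{ric}; similarly $FCPC^TF^H$ simplifies because $(CPC^T+I)^{-1}CPC^T = I-(CPC^T+I)^{-1}$.

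Carrying out this substitution, I expect the expression to collapse to something of the form $-Q$ plus a remainder controlled by the difference between $|1-\sigma|^2$ (together with the linear terms) and the factor $(1-\delta^2)$ built into the MARE. The crucial inequality will be that, when $|\sigma|\le\delta$, the net coefficient multiplying the term $APC^T(CPC^T+I)^{-1}CPA^H$ is nonnegative, because $-2\,\mathrm{Re}(1-\sigma)+|1-\sigma|^2 = |\sigma|^2-1 \le \delta^2-1 = -(1-\delta^2)$ — the very same numerical identity used in \dref{l41}. This should yield
\begin{equation}\label{pro3lyap}
(A+(1-\sigma)LC)\,P\,(A+(1-\sigma)LC)^H - P \le -Q + (\delta^2-|\sigma|^2)\,APC^T(CPC^T+I)^{-1}CPA^H \le -Q,
\end{equation}
so the right-hand side is $\le -Q < 0$. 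Then I would invoke a dual (controllability/transpose) version of Lemma \ref{lem1}: since $P>0$, $Q>0$ (hence certainly $(Q,\cdot)$ observable, in fact $Q>0$ makes observability automatic), the Stein inequality forces $A+(1-\sigma)LC$ to be Schur stable. Both cases (a) and (b) then follow at once, because Lemma \ref{lric} guarantees the existence of the unique $P>0$ precisely under the stated range of $\delta$ — no eigenvalue restriction in case (a), and $\delta<1/\prod_i|\lambda_i^u(A)|$ with $\mathrm{rank}\,B=1$ in case (b).

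The main obstacle I anticipate is the bookkeeping in the matrix expansion, specifically verifying that the $\sigma$-dependent terms assemble exactly into the scalar factor $|\sigma|^2-1$ times the Riccati term, so that the comparison with the MARE's built-in $(1-\delta^2)$ is clean; the complex (non-self-adjoint) nature of $\sigma$ means one must keep track of $\sigma$ versus $\bar\sigma$ carefully, exactly as in \dref{l41}, rather than treating the cross terms as if they combined into $-2\,\mathrm{Re}(1-\sigma)$ only after the matrix parts coincide. A secondary point requiring care is that the inequality \eqref{pro3lyap} is a Stein \emph{inequality} ($\le -Q$) rather than an equality, so I must ensure the version of Lemma \ref{lem1} I apply accommodates $A^HXA-X+Q\le 0$ with strict definiteness of $Q$; this is standard but should be stated explicitly, and the detectability hypothesis on $(A,C)$ is what underwrites the existence of $P$ via Lemma \ref{lric} in the first place.
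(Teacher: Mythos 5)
Your proposal follows essentially the same route as the paper's proof: form the Stein/Lyapunov expression $(A+(1-\sigma)LC)P(A+(1-\sigma)LC)^H-P$ with the MARE solution $P$ as the Lyapunov matrix, simplify the quadratic term via $-I+CPC^T(CPC^T+I)^{-1}=-(CPC^T+I)^{-1}$, collapse the $\sigma$-dependent scalars through $-2\mathrm{Re}(1-\sigma)+|1-\sigma|^2=|\sigma|^2-1$, compare against the factor $-(1-\delta^2)$ built into \dref{ric}, and finish with the discrete-time Lyapunov inequality plus Lemma \ref{lric} for the existence of $P>0$ in each of the two cases. This is exactly the argument in \dref{ricp}.

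However, one sign slip must be repaired, because your key display is false as written. After substituting the MARE, the residual coefficient of $APC^T(CPC^T+I)^{-1}CPA^H$ is $|\sigma|^2-\delta^2$, which is \emph{nonpositive} when $|\sigma|\le\delta$; it is not $\delta^2-|\sigma|^2\ge0$, and your prose claim that the ``net coefficient \ldots is nonnegative'' has the same inversion. With your signs, the second inequality $-Q+(\delta^2-|\sigma|^2)APC^T(CPC^T+I)^{-1}CPA^H\le-Q$ fails whenever $|\sigma|<\delta$ and the Riccati term is nonzero. The correct chain is
\begin{align*}
(A+(1-\sigma)LC)P(A+(1-\sigma)LC)^H-P &\le -Q+(|\sigma|^2-\delta^2)\,APC^T(CPC^T+I)^{-1}CPA^H\\
&\le -Q<0,
\end{align*}
where, as in the paper, one also discards the manifestly nonpositive term $-|1-\sigma|^2APC^T(CPC^T+I)^{-2}CPA^H$ produced by your expansion of $FCPC^TF^H$. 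With this sign corrected, your argument coincides with the paper's, including your (valid) closing remarks that the Lyapunov step needs only the inequality form with $P>0$ and $Q>0$, and that detectability enters solely through Lemma \ref{lric} to guarantee $P$ exists on the stated range of $\delta$.
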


\begin{proof}
Observe that
\begin{equation}\label{ricp}
\begin{aligned}
&(A+(1-\sigma) LC)P(A+(1-\sigma) LC)^H-P\\
&=APA^T-2\mathbf{Re}(1-\sigma)APC^T(CPC^T+I)^{-1}CPA^T-P\\
&\quad+ |1-\sigma|^2APC^T(CPC^T+I)^{-1}CPC^T(CPC^T+I)^{-1}CPA^T\\
&=APA^T+(-2\mathbf{Re}(1-\sigma)+|1-\sigma|^2)APC^T(CPC^T+I)^{-1}CPA^T-P\\
&\quad +|1-\sigma|^2APC^T(CPC^T+I)^{-1}\left(-I+CPC^T(CPC^T+I)^{-1}\right)CPA^T\\
&=APA^T+(|\sigma|^2-1)APC^T(CPC^T+I)^{-1}CPA^T-P\\
&\quad -|1-\sigma|^2APC^T(CPC^T+I)^{-2}CPA^T\\
&\leq APA^T-(1-\delta^2) APC^T(CPC^T+I)^{-1}CPA^T-P\\
&=-Q<0,
\end{aligned}\end{equation}
where the identity $-I+CPC^T(CPC^T+I)^{-1}=-(CPC^T+I)^{-1}$ has been
applied. Then, the assertion follows directly from Lemma \ref{lric}
and the discrete-time Lyapunov inequality.
\end{proof}

\noindent{\bf Algorithm 2}. Assuming that $(A,B,C)$ is stabilizable
and detectable, the protocol \dref{cl}
can the constructed as follows:

\begin{itemize}
\item[1)] Select $K$ such that $A+BK$ is Schur stable.

\item[2)] Choose $L=-APC^T(CPC^T+I)^{-1}$, where $P>0$ is the unique
solution of \dref{ric}.
\end{itemize}

\begin{remark}
By Lemma \ref{lric} and Proposition \ref{pro3}, it follows that
a sufficient and necessary condition
for the existence of the consensus protocol
by using Algorithm 2 is that $(A,B,C)$ is
stabilizable and detectable
for the case where $A$ has no eigenvalues with
magnitude larger than 1. In contrast,
$\delta$ has to further satisfy $\delta<\frac{1}{\underset{i}{\Pi}|\lambda_i^u(A)|}$
for the case where $A$ has at least eigenvalue outside
the unit circle and $B$ is of rank one.
\end{remark}

\vspace*{8pt}

The result below follows directly from Theorem \ref{th2} and
Proposition \ref{pro3}.

\begin{theorem}\label{th4}
Let $(A,B,C)$ be stabilizable
and detectable. Then, the protocol given by Algorithm 2 has a bounded
consensus region in the form of an origin-centered disk of radius
$\delta$, i.e., this protocol solves the consensus problem
for networks with agents \dref{1} with respect to $\Gamma_{\leq
\delta}$, where $\delta$ satisfies $0<\delta<1$ for the case where
$A$ has no eigenvalues with magnitude larger than 1 and
satisfies $0<\delta<\frac{1}{\underset{i}{\Pi}|\lambda_i^u(A)|}$ for
the case where $A$ has a least one eigenvalue outside the unit circle
and $B$ is of rank one.
\end{theorem}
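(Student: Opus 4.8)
The plan is to assemble the statement directly from Proposition \ref{pro3} together with the decomposition result of Theorem \ref{th1} (equivalently, Corollary \ref{cor1}), with Theorem \ref{th2} supplying the limiting value. Algorithm 2 first selects $K$ so that $A+BK$ is Schur stable; with this $K$ fixed, Definition 2 makes the consensus region $\mathcal {S}$ well-defined as the set of $\sigma$ for which $A+(1-\sigma)LC$ is Schur stable. The entire burden of the theorem is thus to show that $\mathcal {S}$ contains the closed disk $\{\sigma:|\sigma|\leq\delta\}$ and then to translate this containment into consensus over $\Gamma_{\leq\delta}$.

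First I would record that Step 2 of Algorithm 2 produces exactly the gain treated in Proposition \ref{pro3}, namely $L=-APC^T(CPC^T+I)^{-1}$ with $P$ the unique positive-definite solution of the MARE \dref{ric}. Proposition \ref{pro3} then states precisely that $A+(1-\sigma)LC$ is Schur stable for every $|\sigma|\leq\delta$, where the admissible range of $\delta$ splits into two cases: $0<\delta<1$ when $A$ has no eigenvalue of magnitude exceeding one, and $0<\delta<\frac{1}{\prod_i|\lambda_i^u(A)|}$ when $A$ has at least one eigenvalue outside the unit circle and $B$ is of rank one. Consequently the disk $\{|\sigma|\leq\delta\}$ lies inside $\mathcal {S}$, which is what is meant by a consensus region ``in the form of an origin-centered disk of radius $\delta$.''

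Second I would invoke the graph-theoretic side. For any $\mathcal {G}\in\Gamma_{\leq\delta}$, the row-stochastic matrix $\mathcal {D}$ has $1$ as a simple eigenvalue, while all remaining eigenvalues $\lambda_i$, $i=2,\cdots,N$, satisfy $|\lambda_i|\leq\delta$ by the very definition of $\Gamma_{\leq\delta}$. Hence each such $\lambda_i$ lies in $\mathcal {S}$, so $A+(1-\lambda_i)LC$ is Schur stable; combined with the Schur stability of $A+BK$ from Step 1, Theorem \ref{th1} (via Corollary \ref{cor1}) yields that the agents reach consensus under the protocol \dref{cl} built by Algorithm 2, and Theorem \ref{th2} further identifies the limit. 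Running this argument under each of the two $\delta$-regimes delivers the two stated bounds and completes the proof.

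I do not expect a genuine obstacle at this level: once Proposition \ref{pro3} is in hand, the argument is essentially a matching of hypotheses. The real difficulty has already been isolated into the preceding results, namely the existence of a positive-definite MARE solution under the restrictive condition that $B$ be of rank one, together with the sharp threshold $\delta<\frac{1}{\prod_i|\lambda_i^u(A)|}$ from Lemma \ref{lric}(b), and the Lyapunov computation of Proposition \ref{pro3}. The only point requiring mild care here is the consistency of intervals: I must ensure that the $\delta$ defining $\Gamma_{\leq\delta}$ is the same $\delta$ for which the MARE solution exists and Proposition \ref{pro3} applies, so that the closed disk of radius $\delta$ is simultaneously the eigenvalue-localization region of $\mathcal {D}$ and a subset of the Schur-stability region $\mathcal {S}$.
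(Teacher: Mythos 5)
Your proof is correct and takes essentially the same route as the paper, which states Theorem \ref{th4} without a separate proof, asserting only that it ``follows directly from Theorem \ref{th2} and Proposition \ref{pro3}''. Your write-up simply makes explicit the hypothesis-matching the paper leaves implicit: Proposition \ref{pro3} places the closed disk of radius $\delta$ inside the consensus region $\mathcal{S}$, and Corollary \ref{cor1} (i.e., Theorem \ref{th1}) together with the definition of $\Gamma_{\leq\delta}$ then yields consensus, with Theorem \ref{th2} giving the limit.
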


\begin{remark}
Note that $\Gamma_{\leq \delta}$ was defined in Section 2,
which is a subset of $\Gamma_m$ in the special case where $A$ is
neutrally stable as discussed in the above section. This is
consistent with the intuition that unstable behaviors are more
difficult to synchronize than the neutrally stable ones.
\end{remark}

\noindent{\bf Example 3}. Let the agents in \dref{1} be
discrete-time double integrators, with
$$A=\begin{bmatrix} 1& 1\\ 0 & 1 \end{bmatrix},\quad
B=\begin{bmatrix} 0 \\ 1 \end{bmatrix},\quad C=\begin{bmatrix} 1 & 0
\end{bmatrix}.$$ Obviously, Assumption 1 holds here. Choose
$K=\left[\begin{matrix} -0.5& -1.5\end{matrix}\right]$, so that
matrix $A+BK$ is Schur stable. Solving equation \dref{ric} with
$\delta=0.95$ gives
$P=10^4\times\begin{bmatrix} 1.1780&0.0602\\
0.0602 &0.0062
\end{bmatrix}$. By Algorithm 2, one obtains $L=\left[\begin{matrix} -1.051& -0.051\end{matrix}\right]^T$.
 It follows from Theorem \ref{th4} that the agents \dref{1} reach consensus under
protocol \dref{cl} with $K$ and $L$ given as above with
 respect to $\Gamma_{\leq 0.95}$. 
Assume that the communication topology $\mathcal {G}$ is given as in
Figure 2, and the corresponding row-stochastic matrix is
$$\mathcal {D}=\begin{bmatrix}0.4& 0 & 0 & 0.1 & 0.3& 0.2\\ 0.5 & 0.5 & 0 &0 &0 &0\\
0.3& 0.2 & 0.5 & 0 & 0& 0\\ 0.5 & 0 & 0 & 0.5 & 0 & 0\\ 0 & 0 & 0 &
0.4 & 0.4 & 0.2\\ 0 & 0 & 0 & 0 & 0.3 & 0.7\end{bmatrix},
$$ whose eigenvalues, other than 1, are
$\lambda_i=0.5,0.5565,0.2217 \pm {\rm i}0.2531$. Clearly,
$|\lambda_i|<0.95$, for $i=2,\cdots,6$. Figure 3 depicts the state
trajectories of network \dref{netg} for this example, which shows
that consensus is actually achieved.

\begin{figure}[htp]
\begin{center}
  \includegraphics[width=2.2in]{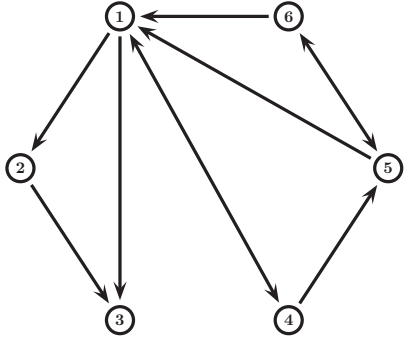}\\
  \caption{The communication topology.}\label{AIMS}
  \end{center}
\end{figure}

\begin{figure}[htp]
\begin{center}
\includegraphics[width=0.8\linewidth,height=0.3\linewidth]{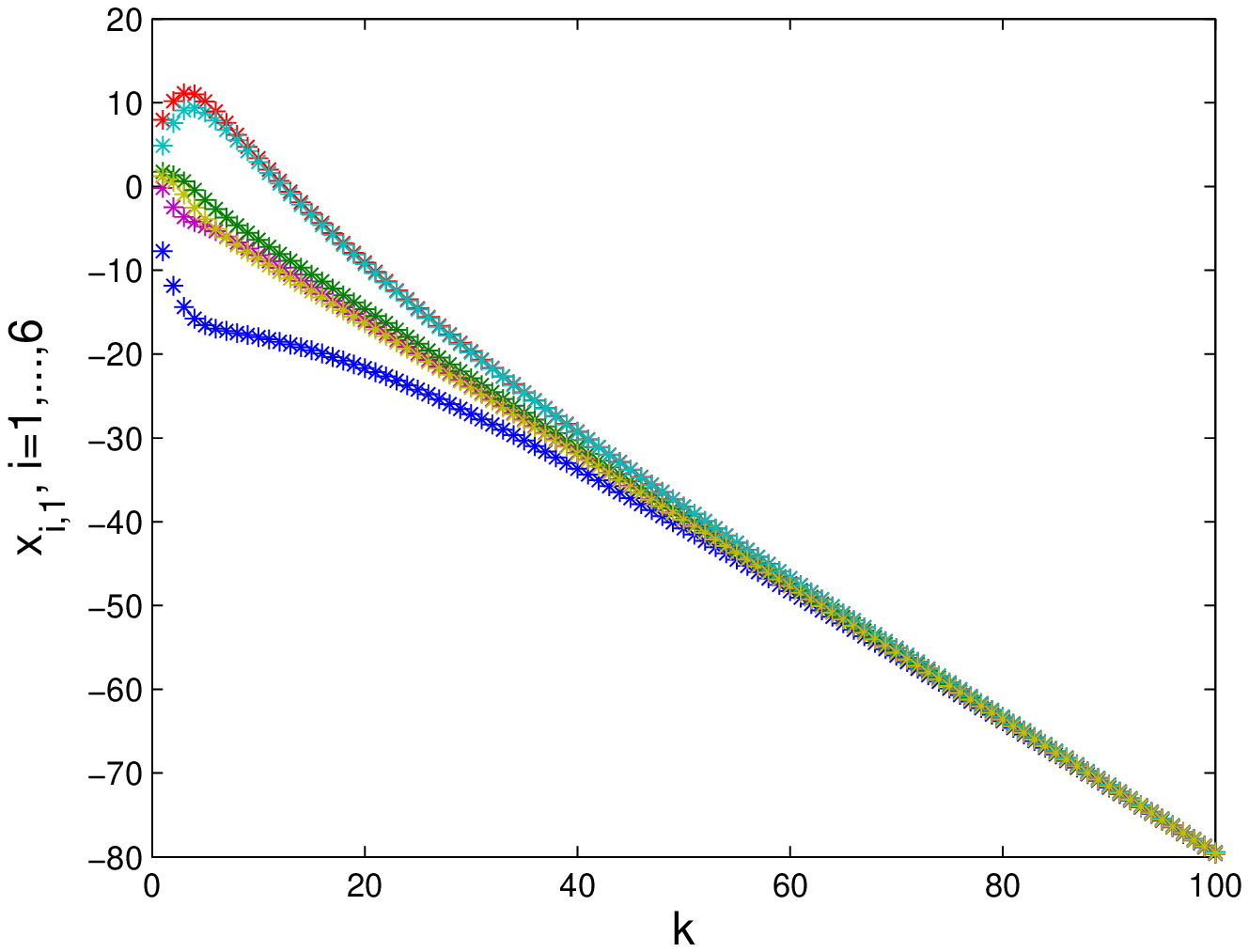}
\\
\includegraphics[width=0.8\linewidth,height=0.3\linewidth]{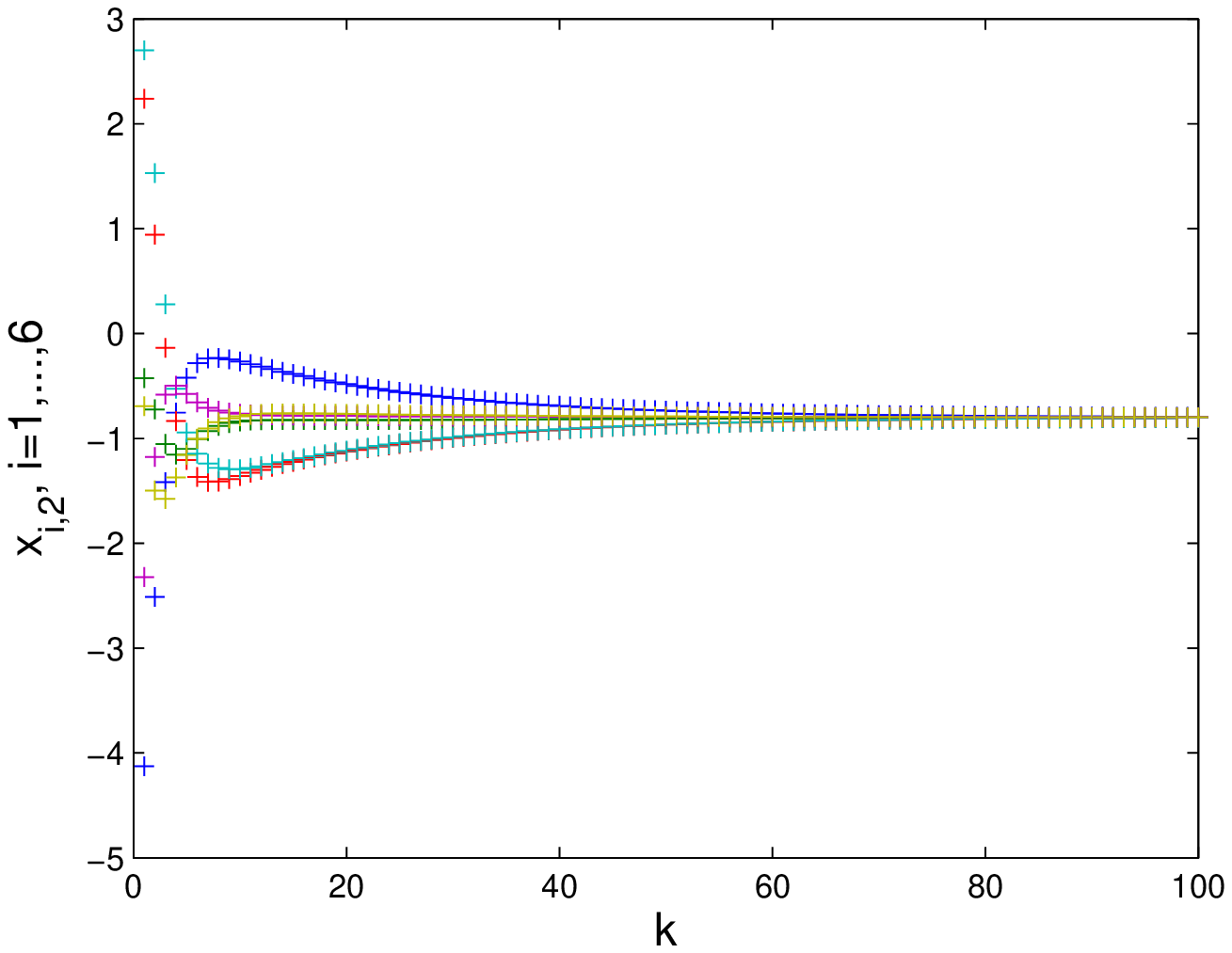}
\\
\includegraphics[width=0.8\linewidth,height=0.3\linewidth]{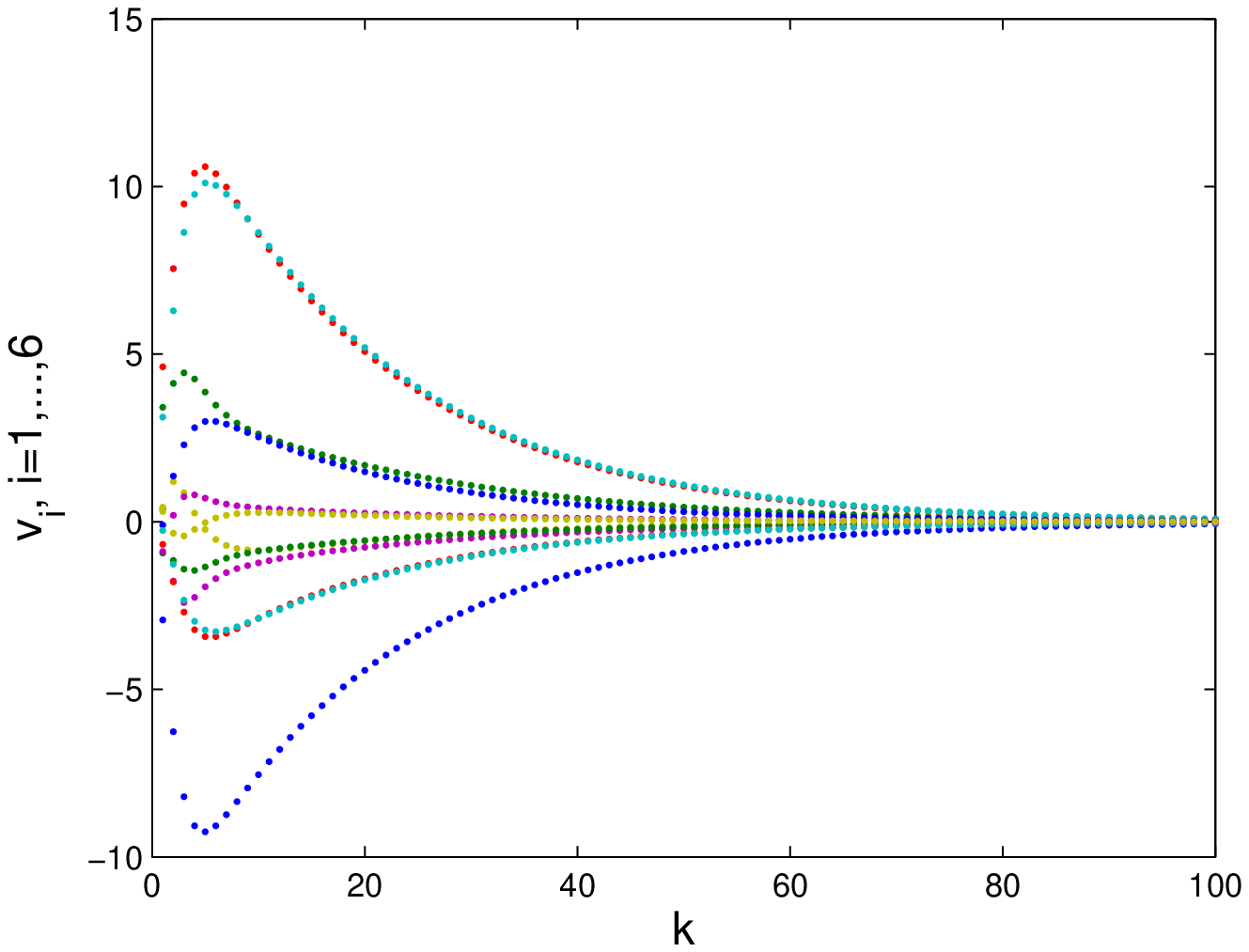}
\caption{The state trajectories of a double-integrator network.}
\end{center}
\end{figure}

\section{Application to Formation Control}

In this section, the consensus algorithms are modified to solve
formation control problems of multi-agent systems.

Let $\widetilde{H}=(h_1,h_2,\cdots,h_N)\in\mathbf{R}^{n\times N}$
describe a constant formation structure of the agent network in a
reference coordinate frame, where $h_i\in\mathbf{R}^{n}$, is the
formation variable corresponding to agent $i$. For example,
$h_1=\left[\begin{matrix}0&0\end{matrix}\right]^T$,
$h_2=\left[\begin{matrix}0&1\end{matrix}\right]^T$,
$h_3=\left[\begin{matrix}1&0\end{matrix}\right]^T$, and
$h_4=\left[\begin{matrix}1&1\end{matrix}\right]^T$ represent a unit
square. Variable $h_i-h_j$ denotes the relative formation vector
between agents $i$ and $j$, which is independent of the reference
coordinate.

A distributed formation protocol is proposed
as\begin{equation}\label{clf}
\begin{aligned}
v_i^+ &=(A+BK)v_i+L\left(\sum_{j=1}^Nd_{ij}C(v_i-v_j)-\tilde{\zeta}_i\right),\\
u_i &=Kv_i,
\end{aligned}
\end{equation}
where $ \tilde{\zeta}_i=\sum_{j=1}^Nd_{ij}(y_i-y_j-C(h_i-h_j))$ and
the rest of variables are the same as in \dref{cl}. It should be
noted that \dref{clf} reduces to the consensus protocol \dref{cl},
when $h_i-h_j=0$, $\forall\,i,j=1,2,\cdots,N$.


\begin{definition} The agents \dref{1} under protocol \dref{clf} achieve a given formation
$\widetilde{H}=(h_1,h_2,\cdots,h_N)$ if
\begin{equation}\label{confor}
\|x_i(k)-h_i- x_j(k)+h_j\|\rightarrow0,~\text{as} ~k\rightarrow
\infty, ~\forall~i,j=1,2,\cdots,N.
\end{equation}
\end{definition}

\begin{theorem} \label{thf1}
        For any $\mathcal {G}\in\Gamma_N$, the
agents \dref{1} reach the formation $\widetilde{H}$ under protocol
\dref{clf} if all the matrices $A+BK$, $A+(1-\lambda_i)LC$,
$i=2,\cdots,N$, are Schur stable, and $(A-I)(h_i-h_j)=0$,
$\forall\,i,j=1,2,\cdots,N$, where $\lambda_i$, $i=2,\cdots,N$,
denote the eigenvalues of $\mathcal {D}$ located in the open unit
disk.
\end{theorem}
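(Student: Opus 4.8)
The plan is to reduce the formation problem to the consensus problem already solved in Theorem \ref{th1} by a shift of coordinates. First I would set $\bar{x}_i = x_i - h_i$ and compute its recursion; from $x_i^+ = Ax_i + Bu_i$ and $u_i = Kv_i$ one gets $\bar{x}_i^+ = A\bar{x}_i + (A-I)h_i + BKv_i$. The hypothesis $(A-I)(h_i-h_j)=0$ says exactly that $(A-I)h_i$ equals a common vector $c$ independent of $i$. Writing $\bar{y}_i = C\bar{x}_i = y_i - Ch_i$, the formation measurement collapses to $\tilde{\zeta}_i = \sum_{j=1}^N d_{ij}(\bar{y}_i - \bar{y}_j)$, i.e. the consensus measurement \dref{er} written in the shifted variables. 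Hence the formation protocol \dref{clf} is precisely the consensus protocol \dref{cl} acting on $(\bar{x}_i, v_i)$, and the formation goal \dref{confor} is identically the consensus goal \dref{con} for the $\bar{x}_i$.

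Next I would stack $\bar{z}_i = [\bar{x}_i^T, v_i^T]^T$ and $\bar{z} = [\bar{z}_1^T,\cdots,\bar{z}_N^T]^T$ to obtain the closed loop
$$\bar{z}^+ = \left(I_N \otimes \mathcal{A} + (I_N - \mathcal{D}) \otimes \mathcal{H}\right)\bar{z} + \mathbf{1}_N \otimes \begin{bmatrix} c \\ 0 \end{bmatrix},$$
which is exactly \dref{netg} perturbed by a constant drift that is identical across all agents. Following the proof of Theorem \ref{th1}, I would introduce $\bar{\xi} = \left((I_N - \mathbf{1}r^T) \otimes I_{2n}\right)\bar{z}$, so that the formation is reached once $\bar{\xi}\to 0$.

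The crux, and the one step deserving care, is that this common drift is annihilated by the projection. Because $r^T\mathbf{1}=1$, one has $(I_N - \mathbf{1}r^T)\mathbf{1}_N = 0$, so the constant term drops out of the $\bar{\xi}$ dynamics and $\bar{\xi}$ obeys the same homogeneous recursion $\bar{\xi}^+ = \left(I_N \otimes \mathcal{A} + (I_N - \mathcal{D}) \otimes \mathcal{H}\right)\bar{\xi}$ as in Theorem \ref{th1}; the needed commutation of $I_N - \mathbf{1}r^T$ with $I_N - \mathcal{D}$ follows, as there, from $r^T(I_N - \mathcal{D}) = 0$ and $(I_N - \mathcal{D})\mathbf{1} = 0$. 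From this point the argument of Theorem \ref{th1} transfers verbatim: the Schur stability of $A+BK$ and of $A+(1-\lambda_i)LC$, $i=2,\cdots,N$, forces $\bar{\xi}\to 0$, whence $\bar{x}_i - \bar{x}_j \to 0$ and the formation $\widetilde{H}$ is achieved. The only genuinely new ingredient beyond Theorem \ref{th1} is the recognition that $(A-I)(h_i-h_j)=0$ makes the drift common, so I expect that cancellation to be the only nontrivial point.
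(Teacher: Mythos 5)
Your proof is correct, and it reaches the theorem by a genuinely different reduction than the paper's. The paper never introduces a drift term: it pins all errors to agent 1, setting $e_{x_i}=x_i-h_i-x_1+h_1$ and $e_{v_i}=v_i-v_1$ for $i=2,\cdots,N$, so that the hypothesis $(A-I)(h_i-h_j)=0$ makes the $h$-terms cancel identically inside the difference dynamics. This yields a reduced $(N-1)$-block error system whose coupling matrix is $I_{N-1}-\mathcal {D}_2+\mathbf{1}_{N-1}\alpha$, with $\alpha=[d_{12},\cdots,d_{1N}]$ and $\mathcal {D}_2$ the corresponding submatrix of $\mathcal {D}$; the key step is then the similarity transformation $S^{-1}(I_N-\mathcal {D})S$, quoted from \cite{ma2}, which shows that the spectrum of $I_{N-1}-\mathcal {D}_2+\mathbf{1}_{N-1}\alpha$ consists exactly of the nonzero eigenvalues of $I_N-\mathcal {D}$, after which the argument of Theorem \ref{th1} is repeated on the reduced system. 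You instead shift every agent by its own formation vector, identify \dref{clf} with the consensus protocol \dref{cl} in the shifted coordinates, and absorb the leftover common drift with the projection $(I_N-\mathbf{1}r^T)\otimes I_{2n}$ already used in the proof of Theorem \ref{th1}; your cancellation and commutation claims are all valid, since $r^T\mathbf{1}=1$ kills the drift, and $r^T(I_N-\mathcal {D})=0$ together with $(I_N-\mathcal {D})\mathbf{1}=0$ give $(I_N-\mathbf{1}r^T)(I_N-\mathcal {D})=(I_N-\mathcal {D})=(I_N-\mathcal {D})(I_N-\mathbf{1}r^T)$. What your route buys is modularity: the formation problem becomes literally the consensus problem plus a common forcing, Theorem \ref{th1} is reused verbatim, no new spectral bookkeeping ($S$, $\mathcal {D}_2$, $\alpha$) is needed, and the hypothesis $(A-I)(h_i-h_j)=0$ is exposed as exactly the condition that makes the drift common across agents. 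What the paper's route buys is that it works directly in the relative variables whose convergence defines the formation, so no drift ever appears and one obtains an explicit reduced-order error system of $N-1$ blocks, at the price of invoking the spectrum identification from \cite{ma2}.
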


\begin{proof}
Let $e_{x_i}=x_i-h_i-x_1+h_1$ and $e_{v_i}=v_i-v_1$, $i=2,\cdots,N$.
Then, the agents \dref{1} can reach the formation $\widetilde{H}$ if
and only if $e_{x_i}(k)\rightarrow0$, as $k\rightarrow \infty$,
$\forall\,i=2,\cdots,N$. By invoking $(A-I)(h_i-h_j)=0$,
$i,j=1,2,\cdots,N$, it follows from \dref{1} and \dref{clf} that
$$
\begin{aligned}
e_{x_i}^+ &=Ae_{x_i}+BKe_{v_i},\\
e_{v_i}^+
&=(A+BK)e_{v_i}+LC\left(\sum_{j=2}^Nd_{ij}(e_{v_i}-e_{v_j})-\sum_{j=2}^Nd_{1j}e_{v_j}\right.\\
&\quad
-\left.\sum_{j=2}^Nd_{ij}(e_{x_i}-e_{x_j})+\sum_{j=2}^Nd_{1j}e_{x_j}\right),~i=2,\cdots,N.
\end{aligned}
$$
Let $e_i=[e_{x_i}^T,e_{v_i}^T]^T$, $i=2,\cdots,N$, and
$e=[e_{2}^T,\cdots,e_{N}^T]^T$. Then, one has
\begin{equation}\label{netgf}
\begin{aligned}
e^+=\left(I_{N-1}\otimes\mathcal {A}+(I_{N-1}-\mathcal
{D}_2+\mathbf{1}_{N-1}\alpha)\otimes\mathcal {H}\right)e,
\end{aligned}
\end{equation}
where matrices $\mathcal {A}$, $\mathcal {H}$ are defined in
\dref{netg}, and $$\begin{aligned}
\alpha &=\begin{bmatrix}d_{12}& d_{13}& \cdots& d_{1N}\end{bmatrix},\\
\mathcal {D}_2 &=\begin{bmatrix}d_{22} & d_{23} & \cdots&
d_{1N}\\d_{32} & d_{33} & \cdots & d_{3N} \\ \vdots & \vdots &
\ddots & \vdots\\ d_{N2} & d_{N3} & \cdots & d_{NN}\end{bmatrix}.
\end{aligned}$$
By the definition of matrix $\mathcal {D}$, one can obtain
\cite{ma2}
$$S^{-1}(I_N-\mathcal {D})S=\begin{bmatrix} 0 & \alpha \\ 0 &
I_{N-1}-\mathcal {D}_2+\mathbf{1}_{N-1}\alpha\end{bmatrix}$$ with
$S=\begin{bmatrix} 1 & 0 \\ \mathbf{1}_{N-1} &
I_{N-1}\end{bmatrix}$. Therefore, the nonzero eigenvalues of
$I_N-\mathcal {D}$ are all the eigenvalues of $I_{N-1}-\mathcal
{D}_2+\mathbf{1}_{N-1}\alpha$. By following similar steps in the
Proof of Theorem \ref{th1}, one gets that system \dref{netgf} is
asymptotically stable if and only if all the matrices $A+BK$,
$A+(1-\lambda_i)LC$, $i=2,\cdots,N$, are Schur stable. This
completes the proof.
\end{proof}

\begin{remark}
Note that all kinds of formation structure can not be achieved for
the agents \dref{1} by using protocol \dref{clf}. The achievable
formation structures have to satisfy the constraints
$(A-I)(h_i-h_j)=0$, $\forall\,i,j=1,2,\cdots,N$. The formation
protocol \dref{clf} for a given achievable formation structure can
be constructed by using Algorithms 1 and 2. Theorem \ref{thf1}
generalizes the results given in \cite{laf, ren082}, where the agent
dynamics in \cite{laf, ren082} are restricted to be (generalized)
second-order integrators.
\end{remark}

\noindent{\bf Example 4}. Consider a network of $6$ double
integrators, described by
$$\begin{aligned}
x_i^{+} &=x_i+\tilde{v}_i, \\
\tilde{v}_i^{+} &=\tilde{v}_i+u_i,\\
y_i &=x_i,\quad i=1,2,\cdots,6,
\end{aligned}$$
where $x_i\in\mathbf{R}^2$, $\tilde{v}_i\in\mathbf{R}^2$,
$y_i\in\mathbf{R}^2$, and $u_i\in\mathbf{R}^2$ are the position, the
velocity, the measured output, and the acceleration input of agent
$i$, respectively.

The objective is to design a protocol \dref{clf} such that the
agents will evolve to a regular hexagon with edge length $8$. In
this case, choose $h_1=\left[\begin{matrix} 0 & 0 & 0 &
0\end{matrix}\right]^T$, $h_2=\left[\begin{matrix} 8 & 0 & 0 &
0\end{matrix}\right]^T$, $h_3=\left[\begin{matrix} 12 & 4\sqrt{3} &
0 & 0
\end{matrix}\right]^T$, $h_4=\left[\begin{matrix}
8 & 8\sqrt{3} & 0 & 0
\end{matrix}\right]^T$, $h_5=\left[\begin{matrix}
0 & 8\sqrt{3} & 0 & 0
\end{matrix}\right]^T$, $h_6=\left[\begin{matrix}
-4 & 4\sqrt{3} & 0 & 0
\end{matrix}\right]^T$.
As in Example 3, take $K=\left[\begin{matrix}-0.5I_2 \end{matrix}\right.\\
\left.\begin{matrix}-1.5I_2
\end{matrix}\right]$ and $L=\left[\begin{matrix}-1.051I_2&
-0.051I_2\end{matrix}\right]^T$ in protocol \dref{clf}. Then, the
agents with such a protocol \dref{clf} will form a regular hexagon
with respect to $\Gamma_{\leq 0.95}$. The sate trajectories of the
$6$ agents are depicted in Figure 4 for the communication topology
given in Figure 2.

\begin{figure}[htp]
\begin{center}
  \includegraphics[width=3.5in]{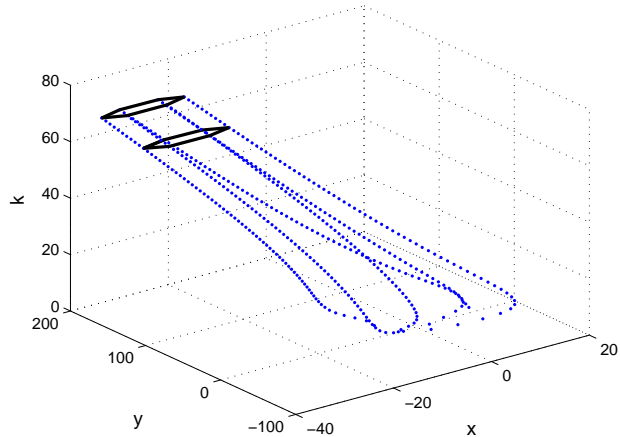}
  \caption{The agents form a regular hexagon.}
  \end{center}
\end{figure}

\section{Conclusions}

This paper has studied the consensus of discrete-time multi-agent
systems with linear or linearized dynamics. An observer-type
protocol based on the relative outputs of neighboring agents has
been proposed, which can be seen as an extension of the traditional
observer-based controller for a single system to one for multi-agent
systems. The consensus of high-dimensional multi-agent systems with
directed communication topologies can be converted into the
stability of a set of matrices with the same low dimension as that
of a single agent. The notion of discrete-time consensus region has
been introduced and analyzed. For neurally stable agents, an
algorithm has been presented to construct a protocol having a
bounded consensus region in the form of the open unit disk.
Moreover, for unstable agents, 
another algorithm has also been proposed to construct a protocol
having an origin-centered disk of radius $\delta$ ($0<\delta<1$) as
its consensus region. 
The consensus algorithms have been further
applied to solve formation control problems of multi-agent systems.
To some extent, this paper generalizes some existing results
reported in the literature, and opens up a new line for further
research on discrete-time multi-agent systems.




\medskip
Received xxxx 20xx; revised xxxx 20xx.
\medskip

\end{document}